\newif\ifdraft
\renewcommand\vec[1]{\tilde{#1}}
\newcommand{\wrkset}{\mathcal{W}}
\newcommand{\STAT}{\textsc{Stat}}
\newcommand{\ON}{\textsc{On}}
\newcommand{\Cost}{\mathrm{Cost}}
\newcommand{\RecCost}{\mathrm{adj}}
\newcommand{\netw}{N}
\newcommand{\demg}{G}
\newcommand{\BST}{\emph{BST}}
\newcommand{\insrt}{\mathrm{insert}}
\newcommand{\forward}{\mathrm{forward}}
\newcommand{\adjust}{\mathrm{adjust}}
\newcommand{\reset}{\mathrm{reset}}
\def\abs#1{\lvert #1 \rvert}
\newcommand{\card}[1]{\lvert #1\rvert}
\def\d{\mathrm{d}}
\def\A{\mathcal{A}}
\newtheorem{theorem}{Theorem}
\newtheorem{claim}{Claim}
\newtheorem{definition}{Definition}
\newtheorem{property}{Property}
\newtheorem{lemma}{Lemma}
\def\system{\emph{ReNet\xspace}}
\def\systems{\emph{ReNet}s\xspace}
\newcommand\chen[1]{\color{red} \textbf{Chen: #1 }\color{black}}
\newcommand\stefan[1]{\color{blue}\textbf{Stefan: #1 }\color{black}}
\title{
ReNets: Toward Statically Optimal\\
Self-Adjusting Networks}
\author{Chen Avin$^1$ \quad Stefan Schmid$^2$\\
{\small $^1$ Ben Gurion University, Israel \quad
$^2$ University of Vienna, Austria} 
}
\date{}
\begin{document}

\maketitle

\begin{abstract} 
This paper studies the design of \emph{self-adjusting}
networks whose topology dynamically adapts to the workload,
in an \emph{online} and \emph{demand-aware} manner. 
This problem is motivated by emerging 
optical technologies which allow to reconfigure
the datacenter topology at runtime.
Our main contribution is \emph{ReNet},
a self-adjusting network 
which maintains a balance between
the benefits and costs of reconfigurations.
In particular, we show that \emph{ReNets} are
\emph{statically optimal} for arbitrary sparse communication demands,
i.e., perform
at least as good as any fixed demand-aware 
network 
designed with a perfect knowledge of the \emph{future} demand.
Furthermore, \emph{ReNets} provide 
\emph{compact} and \emph{local} routing,
 by leveraging ideas from
self-adjusting datastructures.
\end{abstract}

\maketitle

\section{Introduction}\label{sec:intro}

Modern datacenter networks rely on efficient
network topologies (based on fat-trees~\cite{clos}, hypercubes~\cite{bcube,mdcube}, or expander~\cite{xpander}
graphs) to provide a high connectivity at low cost~\cite{survey2017datacenter}.
These datacenter networks have in common that 
their topology is \emph{fixed} and \emph{oblivious} 
to the actual demand (i.e., workload
or communication pattern)
they currently serve.
Rather, they are designed for 
all-to-all
communication patterns, by ensuring 
properties such as full bisection bandwidth
or $O(\log{n})$ route lengths between \emph{any}
node pair in a constant-degree~$n$-node network.
However, demand-oblivious networks
can be inefficient for more \emph{specific}
demand patterns, as they usually arise in practice:
Empirical studies show that
traffic patterns in datacenters are often \emph{sparse
and skewed}~\cite{projector}, featuring much (spatial and temporal) locality.

This paper investigates algorithms for \emph{demand-aware} networks (DANs):
networks which provide shorter average route lengths
by accounting for locality in the demand
and locating frequently
communicating node pairs topologically closer.
Shorter routes can improve network performance  (e.g., latency)
and reduce costs (e.g.,
load, energy consumption).

DANs come in two flavors:
\emph{fixed} and \emph{self-adjusting}.
Fixed DANs can exploit \emph{spatial} locality
in the demand. It has recently been shown that a 
fixed DAN can provide average route lengths in the order
of the (conditional) \emph{entropy} of the demand~\cite{disc17}, which can be
much lower than the $O(\log{n})$ route lengths
provided by demand-oblivious networks \emph{for specific demands}.
However, fixed DANs 
require \emph{a priori} knowledge of the demand.

\emph{Self-adjusting} DANs do not require such knowledge
and can additionally exploit 
\emph{temporal} locality, by adapting to 
the demand in an \emph{online} manner.
The vision of such self-adjusting networks is enabled 
by emerging optical technologies which allow us to
\emph{reconfigure} the topology over time~\cite{helios,rotornet,mordia,firefly,projector}.

However, the design of self-adjusting DANs
is challenging:
while more frequent reconfigurations allow 
to adapt the topology to the demand in a more
fine-grained manner, such reconfigurations 
also come at a cost. 
Hence, an optimal tradeoff between the benefits and the
costs of such reconfigurations has to be found.
Further challenges are introduced by the online
nature of the problem and the lack of a priori 
knowledge about the demand. 

\begin{figure}[t]
\centering
\includegraphics[width=.82\columnwidth]{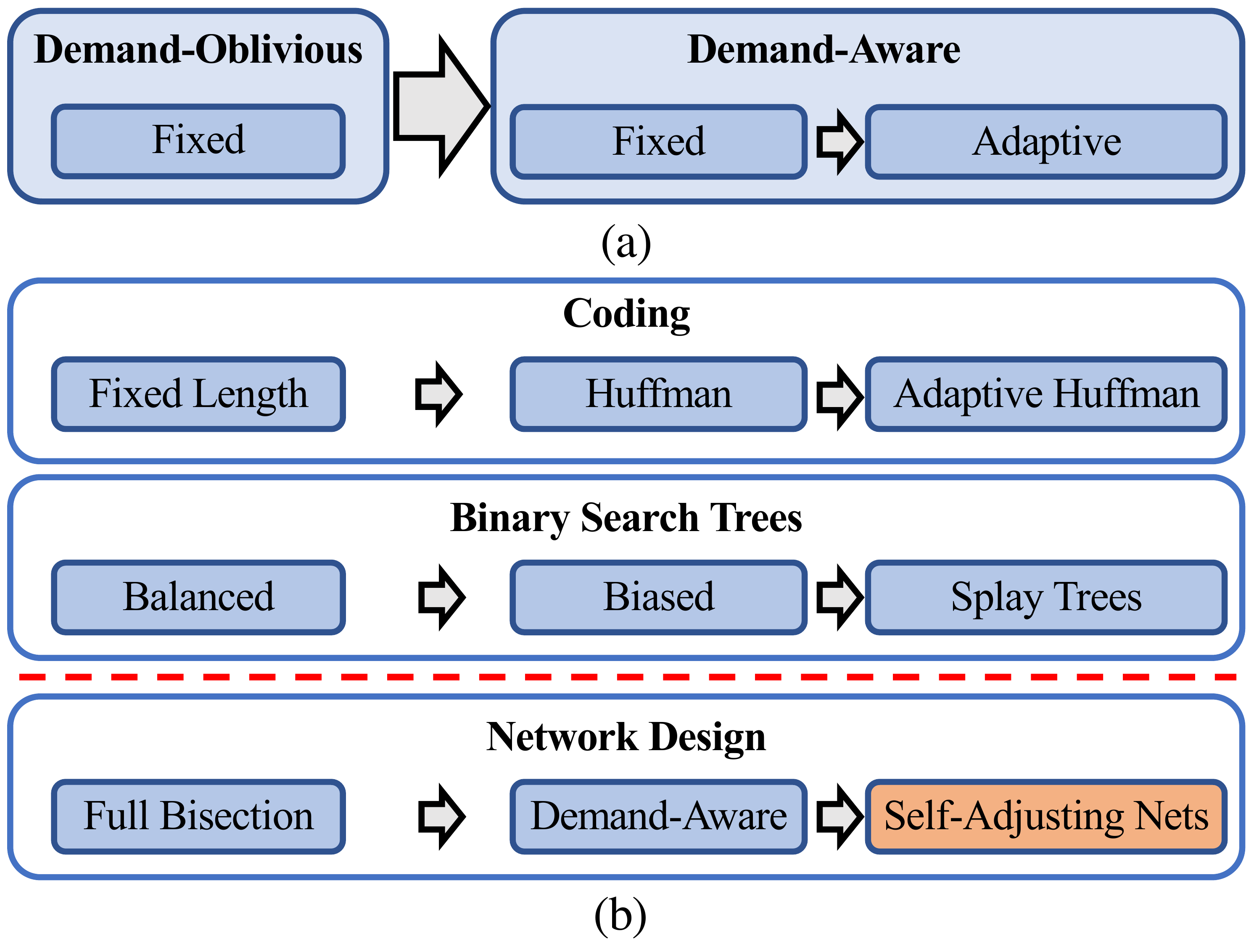}
\caption{Analogies of self-adjusting networks.}
\label{fig:schema}
\end{figure}

Ideally, a self-adjusting network provides
an optimal performance \emph{even in hindsight}:
despite the lack of information on the demand, 
the performance is at least as good as
the (fixed) demand-aware network, for
sufficiently long demands. 
This property is called \emph{static optimality}. 
Static optimality 
(and the related notion of \emph{regret
minimization})
is a strong notion of optimality and 
frequently used 
to evaluate algorithms based on limited
information, for example in the context of 
coding 
(e.g., dynamic Huffman codes \cite{Vitter1987}),
 self-adjusting datastructures
(e.g., splay trees ~\cite{splaytrees}),
or 
repeated games and
machine learning~\cite{auer2002nonstochastic,freund1999adaptive,littlestone1994weighted,bansal2003online}. 

\subsection{Analogy to Coding and Datastructures}

Our vision of self-adjusting networks, and in particular, 
the difference between demand-oblivious
and demand-aware networks (fixed and self-adjusting),
can be explained by an analogy.
This analogy will also provide an intuition why the entropy of the demand
is an important metric for self-adjusting networks.

In fact, we will provide two analogies: one to coding,
and one to datastructures, and in particular binary search trees (BSTs).
Figure~\ref{fig:schema} illustrates the analogy and evolution of
self-adjusting systems, from fixed demand-oblivious, over fixed
demand-aware, to self-adjusting demand-aware.

Simple, oblivious coding schemes which do not rely on
any knowledge on the input are based on
``fixed length'' coding: each symbol
is encoded using the same code length
(i.e., a logarithmic number of bits).
Without further information on the input, this is also optimal.
However,  
the performance of such an oblivious coding scheme 
may be far from optimal for a \emph{specific} input. 
In contrast to input-oblivious fixed-length coding, coding schemes
such as \emph{Huffman coding}~\cite{huffman1952method}
account for the frequency of the communicated symbols:
frequent symbols are assigned short codes,
infrequent symbols long codes.
Indeed, Huffman coding can result in much shorter
codes for specific inputs: code lengths 
are proportional to the \emph{entropy}
in the input.
A main drawback of such fixed approaches is that they
require a priori  knowledge about the (future) input, which
may not be available. 
Furthermore, they only allow to leverage \emph{spatial} locality, 
but not \emph{temporal} locality, i.e., change over time in the
input patterns/distribution. 
This can render such approaches inefficient under dynamic inputs,
or even impractical.
These limitations can be overcome by \emph{adaptive}
coding approaches, such as 
\emph{adaptive
Huffman codes} or arithmetic coding  
~\cite{knuth1985dynamic,Vitter1987}: these codes 
adjust to the input, in an \emph{online manner},
leveraging spatial \emph{and} temporal locality over time. 

Similar tradeoffs arise in 
the design of network topologies
(cf.~Figure~\ref{fig:schema}).
Traditional communication network
designs are \emph{demand-oblivious}
and \emph{fixed}, which can be suboptimal
under specific demand patterns.
In contrast, fixed DANs are optimized
toward a specific demand, which is assumed to 
be \emph{known}. Accordingly, the network can be designed
in such a way that traffic between \emph{frequently} 
communicating node pairs
is routed along \emph{shorter} paths, while traffic
between node pairs communicating \emph{infrequently}
is routed along \emph{longer} paths.
In other words, the network topology accounts for the \emph{entropy}
in the communication demand. 
However, the 
required knowledge about the (future) demand
may not be available, which makes 
adaptive (i.e., self-adjusting) DANs, like the one 
we present in this work, 
an attractive alternative:  
these networks can learn and adjust to the demand
\emph{in an online manner}.

Similar examples exist in the context of datastructures,
and in particular, Binary Search Trees (BSTs). As we will see,
BSTs play an important role in this paper in general.
Traditional BSTs are \emph{(demand-)oblivious}:
items are stored 
at distance $O(\log{n})$ from the root (on average), uniformly and independently
of their frequency. \emph{Demand-aware}, 
\emph{fixed} BSTs (a.k.a.~biased search trees)
such as~\cite{Mehlhorn75,knuth1971optimum,hu1971optimal,bent1985biased} 
account for the frequency 
of the accessed items: frequent items are stored close to the root, 
infrequent items are lower in the tree.
Finally, \emph{self-adjusting} BSTs, 
or self-adjusting demand-aware BSTs, 
such as \emph{splay trees}~\cite{splaytrees}
allow to adapt to the workload over time and account for 
spatial as well as temporal locality.

\subsection{Our Contributions}

The main contribution of this paper is
a self-adjusting demand-aware network called
$\system$ which:
(1) is provably statically optimal under sparse 
communication patterns, and therefore, 
as we will show, provides
entropy-proportional route lengths, 
without requiring
any knowledge of future demands; 
(2) is scalable in that it is of constant
degree and features \emph{compact 
routing} (i.e., constant-size forwarding tables); 
(3) supports \emph{local routing}, allowing us
to reconfigure 
networks seamlessly
while (4) relying on arbitrary addresses.

The $\system$ network relies on ideas
from self-adjusting datastructures. In particular
a $\system$ is based on a set of trees,
called \emph{ego-trees},
which are (dynamically) optimized for \emph{individual} 
nodes. A
$\system$ is then a union of all the ego-trees
of individual nodes, using algorithmic manoeuvres to
make sure that 
the degree (and routing tables) remain constant
at any time. 
The ego-tree of a given node 
stores the \emph{working set} of that node:
the set of its recent 
communication partners.

More specifically, the working set of 
each of these nodes is 
organized as a self-adjusting binary search tree
($\BST$).
While different types of such ego-trees can be used 
(e.g., Huffman trees, tango trees~\cite{tangotrees}, etc.), 
$\system$ uses \emph{splay trees}~\cite{splaytrees}. As we will see, this 
will result in desirable properties, such as compact and local routing. 

\subsection{Organization}

The remainder of this paper is organized
as follows.
We introduce our model and identify desirable properties
for self-adjusting demand-aware networks in 
Section~\ref{sec:model}.
Subsequently, we present
our algorithm (Section~\ref{sec:renet})
and its analysis (Section~\ref{sec:analysis}).
After reviewing related work in Section~\ref{sec:relwork},
we conclude and discuss future work
in Section~\ref{sec:conclusion}.
Some proofs and additional examples are deferred to the Appendix.

\section{Preliminaries and Model}
\label{sec:model}

We consider a set~$V$ of~$n$ nodes 
$V=\{1,\ldots,n\}$ with unique  
but otherwise arbitrary addresses.
The communication \emph{demand} among these nodes
is described as a (finite or infinite) sequence 
$\sigma =
(\sigma_0, \sigma_1 \ldots)$ of \emph{communication
requests} where~$\sigma_t \subseteq V \times V$
is a set of source-destination communication pairs
$(u,v)  \in V \times V$ which communicate \emph{simultaneously}
at time $t$.
 The communication demand is
\emph{revealed in an online manner} and can be adversarial.

In order to serve this demand, the nodes~$V$ must
be inter-connected by a DAN~$\netw$, defined over the 
same set of nodes. 
In case of a \emph{self-adjusting} DAN,
 ~$\netw$ can also change over time, and we denote
  by~$\netw_t$ the network at time~$t$.
For scalability reasons and since
reconfigurable links may be costly and consume space,
the DAN must be chosen from 
the family of \emph{degree-bounded} topologies:
the networks $\netw$ considered in this paper
are required to be of constant degree at most $\Delta$. 

The route length to serve a request~$\sigma_t=(u,v)$ on the DAN,
is given by the hop distance~$\d_{\netw_t}(u,v)$ 
from~$u$ to~$v$, along the routing path chosen
by the algorithm over~$\netw_t$. If not specified otherwise, 
we assume shortest path routing.

We are interested in the fundamental tradeoff between the 
benefits of self-adjusting algorithms
(i.e., shorter routes) and their costs (namely reconfiguration costs). 
Let~$\A$ be an algorithm that given the request~$\sigma_t$ and the 
network~$\netw_t$ at time~$t$, transforms the current network 
to~$\netw_{t+1}$ at time~$t+1$. 
The cost of the network reconfiguration
at time~$t$ is given by the number of link changes performed to
change~$\netw_t$ to~$\netw_{t+1}$; when~$\A$ is clear from the context, we
will simply write this cost as~$\RecCost(\netw_{t},\netw_{t+1})$. 
Recall that an algorithm incurs a communication cost to serve 
request~$\sigma_t=(u,v)$, 
which depends on the hop distance~$\d_{\netw_t}(\sigma_t)=\d_{\netw_t}(u,v)$ 
from~$u$ to~$v$ in $\netw_t$.
\begin{definition}[\bf Average and Amortized Cost]
Given an algorithm~$\A$, an initial network~$\netw_0$, a 
distance function
$\d_N(\cdot)$, and a sequence~$\sigma=(\sigma_0,
\sigma_1 \ldots \sigma_{m-1})$ of communication requests over time,
we define the \emph{(average) cost} incurred by~$\A$ as:
\begin{align}
\Cost( \A, \netw_0, \sigma) = \frac{1}{m}\sum_{t=0}^{m-1} (\d_{N_t}(\sigma_t) +
\RecCost(\netw_{i},\netw_{i+1}))
\end{align}
The \emph{amortized
cost} of~$\A$ is defined as the worst possible cost of~$\A$
 over all initial networks~$\netw_0$ and all sequences~$\sigma$, i.e.,
$\max_{\netw_0,\sigma}\Cost( \A, \netw_0, \sigma)$.
\end{definition}

In addition to the DAN,
nodes can also communicate over a demand-\emph{oblivious}
network: reconfigurable datacenter networks are usually \emph{hybrid}~\cite{projector},
connecting fixed (electric) switches with reconfigurable (optical) switches.
The demand-oblivious network plays a minor role in this paper, and is only used
to exchange control information (e.g., discover new neighbors).
Route lengths on the demand-oblivious networks cost
$D$ per request, where~$D$ is a parameter: e.g., $D$ is the 
diameter of the demand-oblivious network, $D=\Theta(\log{n})$ in constant-degree networks.

A main challenge faced by self-adjusting DANs 
is that information about (future) demand may not
be available. 
Our goal is to design algorithms for 
\emph{statically optimal} networks:
\begin{property}[\bf Static Optimality]
Let~$\STAT$ be an optimal static algorithm with perfect
knowledge of the demand~$\sigma$,
and let~$\ON$ be an online algorithm producing a sequence
of degree-bounded networks (i.e., the maximum degree
is at most $\Delta$). We say that~$\ON$ is \emph{statically
optimal}
if, for sufficiently long communication patterns~$\sigma$:
$$
\rho = \max_{\sigma} 
\frac{\Cost(\ON, \emptyset, \sigma)}{\Cost(\STAT,\netw^*, \sigma)} 
$$
\noindent is \emph{constant}.
Here,~$\netw_0 = \emptyset$ is the empty network from which ~$\ON$ starts,
and~$\netw^*$ is the statically optimal degree-bounded network for $\sigma$. 
In other words,~$\ON$'s cost is at most a 
constant factor higher than~$\STAT$'s in the worst case.
\end{property}

We conclude this section with some definitions.
We first note that we can think of the entire sequence
$\sigma$, or a subsequence 
$\sigma'\subseteq \sigma$,
as a directed and weighted 
\emph{demand graph}~$\demg(\sigma')=(V(\sigma'),E(\sigma'))$. 
Here, the node set~$V$ of~$G$ is given by the set of nodes participating in 
$\sigma'$, i.e.,~$V(\sigma') = \{ v : v \in \sigma' \}$, and the set of directed edges~~$E$ is given by 
$E(\sigma') = \{\sigma'_t : \sigma'_t \in \sigma']\}$.  The weight~$w(e)$ of each directed edge~$e=(u,v)\in E$  is the frequency 
$f(u,v)$ of the request from~$u$ to~$v$ in~$\sigma'$,
where~$\sum_{u,v \in V(\sigma')} f(u,v)=1$.
We are interested in sparse communication patterns:
\begin{definition}[\bf~$(c,\delta)$-sparse Communication]\label{def:sparse}
We call a communication demand~$\sigma$ 
\emph{$(c,\delta)$-sparse} if and only if
any subsequence~$\sigma'$ of~$\sigma$ 
of length~$|\sigma'|\leq \delta$, 
involves no more than~$c\cdot n$ unique communication pairs where $c$ is a constant and $\delta$ is a function of $n$.
That is,~$\sigma'$ implies a 
  \emph{sparse demand graph} 
 ~$\demg(\sigma')=(V,E(\sigma'))$ of average
  degree~$2 \abs{E(\sigma')} / n \leq 2c$. 
\end{definition}
Note that for~$\delta=\infty$, the entire communication pattern
$\sigma$ needs to be sparse. For ~$\delta \le cn$, the constraint
is trivial.

We define the \textbf{\emph{entropy of the demand}}~$\sigma=(\sigma_1,\sigma_2,\ldots,
\sigma_m)$ to be served by a communication network. 
Recall that~$\sigma_t=(\hat{X},\hat{Y})$ describes a source-destination pair
(e.g., a client~$x$ communicating to a server~$y$). 
We will interpret~$\sigma$ as a \emph{joint empirical frequency 
distribution}~$(\hat{X},\hat{Y})$, where~$\hat{X}$ is the empirical frequency of the sources 
and~$\hat{Y}$ is the empirical frequency of the destinations.
In particular, in the following, the term \emph{entropy}
will refer to the \emph{empirical
entropy} of~$\sigma$, i.e., the entropy implied by
the communication \emph{frequencies}. 

More formally, let~$\hat{X}_\sigma =
\{f(x_1,\cdot), \dots , f(x_n,\cdot)\}$ be the empirical 
frequency distribution of the \emph{communication sources} (origins)
occurring in the communication sequence~$\sigma$, 
i.e.,~$f(x_i,\cdot)$ is
the frequency with which a node~$x_i$ appears as 
a source in the sequence:
$f(x_i,\cdot) = (\# x_i \: \text{ is a source in }\: \sigma)/m$, when $m= \abs{\sigma}$ is the length of $\sigma$.
We omit~$\sigma$ in~$\hat{X}_\sigma$ when it is clear from the context.
The \emph{empirical entropy}~$H(\hat{X})$ is then defined as
$H(\hat{X}) = - \sum_{i=1}^n f(x_i)\log_2 f(x_i)$, where $f(x_i)$
is used as a shorthand for $f(x_i,\cdot)$. 
Similarly, we define
the empirical entropy of the \emph{communication destinations}
$H(\hat{Y})$: we
consider~$\hat{Y}_\sigma =
\{f(\cdot,y_1), \dots , f(\cdot,y_n)\}$ where
$f(\cdot,y_j)$ is
the frequency with which a node~$y_j$ appears as 
a destination in the sequence.
We use the normalization
$f(x \vert y) = f(x,y)/f( (\cdot, y)$.
The empirical \emph{joint} entropy~$H(\hat{X}, \hat{Y})$ 
 is defined as~$H(\hat{X}, \hat{Y}) = - \sum_{i,j} f(x_i, y_j)\log_2 f(x_i, y_j)$
 and  the empirical \emph{conditional} entropy 
~$H(\hat{X} \vert \hat{Y})$ 
which measures spatial locality  as 
~$H(\hat{X} \vert \hat{Y})= 
 - \sum_{j} f(y_j) \sum_{i} f(x_i \mid y_j )\log_2 f(x_i \mid y_j )$.
We may simply write~$H$ for the entropy if
the usage is given by the context.
By default, we will denote by~$H$ the entropy computed using the binary
logarithm; 
if a different logarithmic basis~$\Delta$ is used to compute the entropy,
we will explicitly write~$H_\Delta$.

It was recently shown that the 
\emph{conditional entropy} of the demand, and in particular 
$\max(H_\Delta(\hat{Y}|\hat{X}), H_\Delta(\hat{X}|\hat{Y})$ 
is a lower bound for the average route length in any (constant) 
degree-$\Delta$ bounded, \emph{fixed}
network~\cite{disc17}.\footnote{We note that the result 
in~\cite{disc17} is stated for 
the entropy and not the empirical entropy, however, 
the claim follows directly.}
This bound can be (asymptotically) matched if the demand is sparse and 
the demand~$\sigma$ is known \emph{a priory},
before designing the network. 
In contrast, in this work, we are interested in solutions
that match the conditional entropy lower bound, but for a demand $\sigma$ that is \emph{unknown} a priory. This makes 
the task much more challenging.

\section{Statically Optimal Self-Adjusting Networks}\label{sec:renet}

This section presents statically optimal algorithms for 
\systems,
self-adjusting networks of bounded degree 
which support the following additional desirable
properties. 
We first discuss desirable properties of such
self-adjusting networks, then present algorithmic
building blocks, and finally describe \systems' forwarding 
tables in details.

\subsection{Desirable Properties of Self-Adjusting Networks}

In order to ensure scalability, 
each node in a self-adjusting network should 
not only rely on at most a constant number
of reconfigurable links,
we would like to have an even stronger property: 
namely that the forwarding tables are of constant size, i.e., \emph{compact}~\cite{frederickson1988designing,thorup2001compact}. 
\begin{property}[\bf Compact Routing]\label{prop2}
A network supports
\emph{compact routing}
if the sizes of the nodes' forwarding tables are 
constant, i.e., independent of the network size.
\end{property}

A key challenge in the design of self-adjusting
networks is that topological reconfigurations
may negatively affect routing. 
A particularly attractive (but seemingly difficult to achieve) 
property for routing
in dynamic networks is \emph{local routing}:
\begin{property}[\bf Local Routing]\label{prop3}
A network provides \emph{local routing} if packets
can be forwarded based on local knowledge only. In particular,
topological changes should not entail the
global re-computation of routes. 
\end{property}

Furthermore, we do not want the compact and local routing 
properties to depend on any specific addressing scheme.
In particular, addresses can be flat and location-independent.
\begin{property}[\bf Arbitrary Addressing]\label{prop4}
Nodes can have arbitrary (but unique) addresses.
\end{property}

\subsection{Outline and Algorithmic Building Blocks}

We first describe the main ideas and building blocks of $\systems$.
Ideally, each node~$u\in V$ in~$\system$ connects 
\emph{directly}
to all its communication partners in~$\demg(\sigma)$,
 achieving an ideal
average route length of 1.
However, this is infeasible, as (1) the communication
partners are not known to~$u$ \emph{a priori} and
(2) a node~$u$ may have \emph{many} communication 
partners (even in an otherwise sparse demand graph), 
which would result in a high degree and 
large forwarding tables. 
To overcome this, ~$\system$ leverages several key concepts:

\noindent \textbf{Concept 1 - Working Set.} 
Each node $u$ in~$\system$ 
keeps track of it recent active communication partners,
i.e., the so-called \emph{working set}~$\wrkset(u)$,
hoping to exploit temporal locality as they
are also likely to be relevant in the near future. 
The working set will be defined over the recent subinterval 
of~$\sigma$ that will be defined later. 

\noindent \textbf{Concept 2 - Small and Large Nodes.}
A node~$u$ in~$\system$ pursues one of
two different approaches to communicate with its recent communication
partners, depending on its working set size.
Towards this end, we define the \emph{size} of a node~$u$
to be the cardinality~$|\wrkset(u)|$ of~$u$'s working set.
We say that a node~$u$ is \emph{small} if the size of~$u$ 
is smaller than a parameter~$\theta$ and otherwise, a node is called \emph{large}.
For now assume $\theta$ is a constant which depends on the sparsity of the communication sequence (i.e., Definition \ref{def:sparse}), we will discuss the details later.
%
A small node will communicate to its communication partners
\emph{directly}; a large node \emph{indirectly},
by forwarding the traffic along its \emph{ego-tree}, which we explain next.

\noindent \textbf{Concept 3 - Ego-Tree/Ego-BST.} 
For large nodes~$u$, establishing links or
storing forwarding rules for each communication partner 
in~$\wrkset(u)$ is infeasible as it would result
in forwarding tables of non-constant size. 
Thus, in~$\system$, a large node~$u$ 
 organizes its communication partners~$\wrkset(u)$ 
in a self-adjusting \emph{ego-tree}: a \emph{(tree) network}
 optimized for just this
source. In particular, we propose to use a self-adjusting 
\emph{binary search tree} for the ego-tree of a large node $u$, short \emph{ego-}$\BST(u)$.
An important property of BSTs (fixed and self-adjusting) is that they naturally supports local and compact routing
for messages to or from the \emph(root) of the tree.
In particular, the (self-adjusting) \emph{ego-}$\BST(u)$, 
is used to efficiently \emph{store} and \emph{lookup} (i.e., forward to)
neighbors~$v\in \wrkset(u)$.
Each node that belongs to such a~\emph{ego-}$\BST(u)$ 
support the following interface:
\begin{itemize}
\item~\emph{ego-}$\BST(u).\insrt(v)$: insert~$v$ to~\emph{ego-}$\BST(u)$
\item~\emph{ego-}$\BST(u).\forward(v)$: forward packet toward~$v$ 
\item~\emph{ego-}$\BST(u).\adjust()$: local update of tree network datastructure
\end{itemize}

\begin{figure*}[t]
\centering
\begin{tabular}{ccc}
\multirow{2}{*}[.3in]{\includegraphics[width=.25\textwidth]{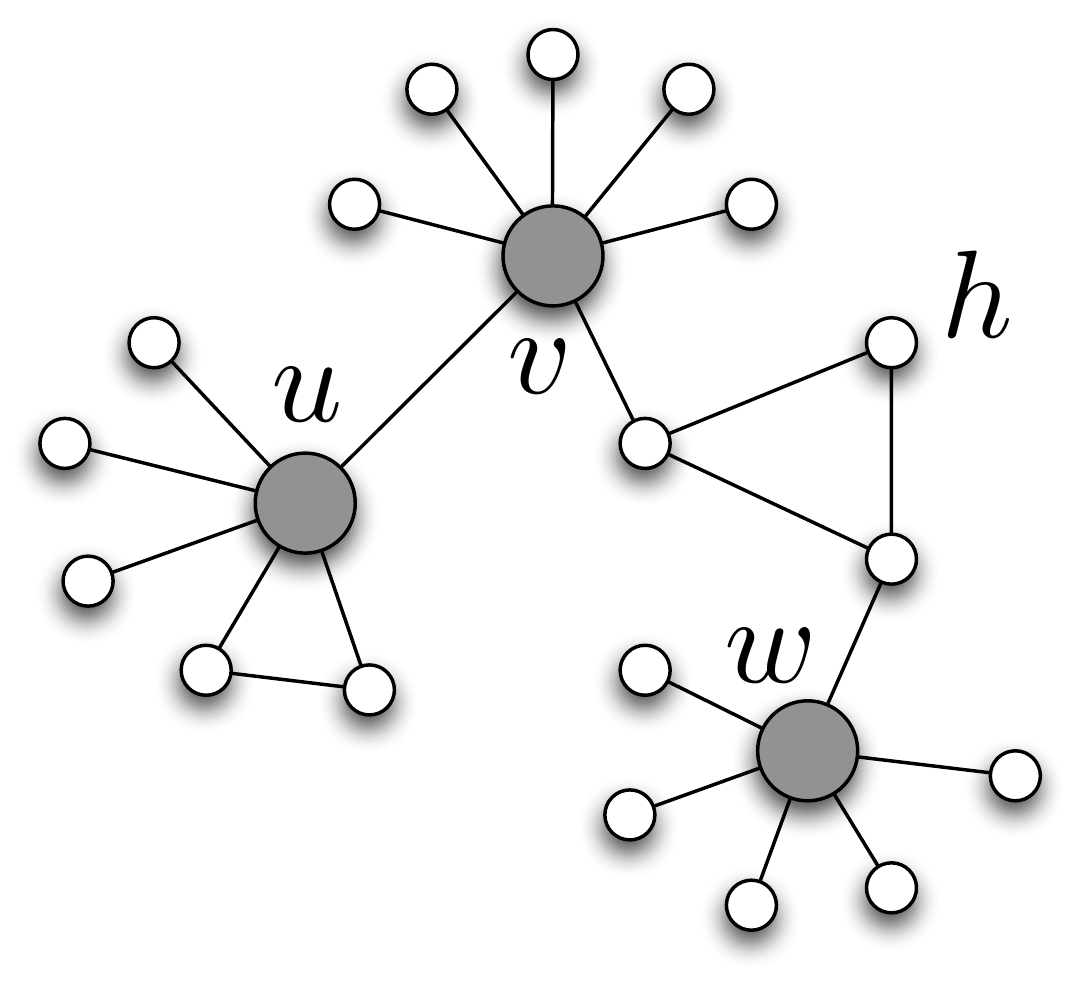}} &
\includegraphics[width=.4\textwidth]{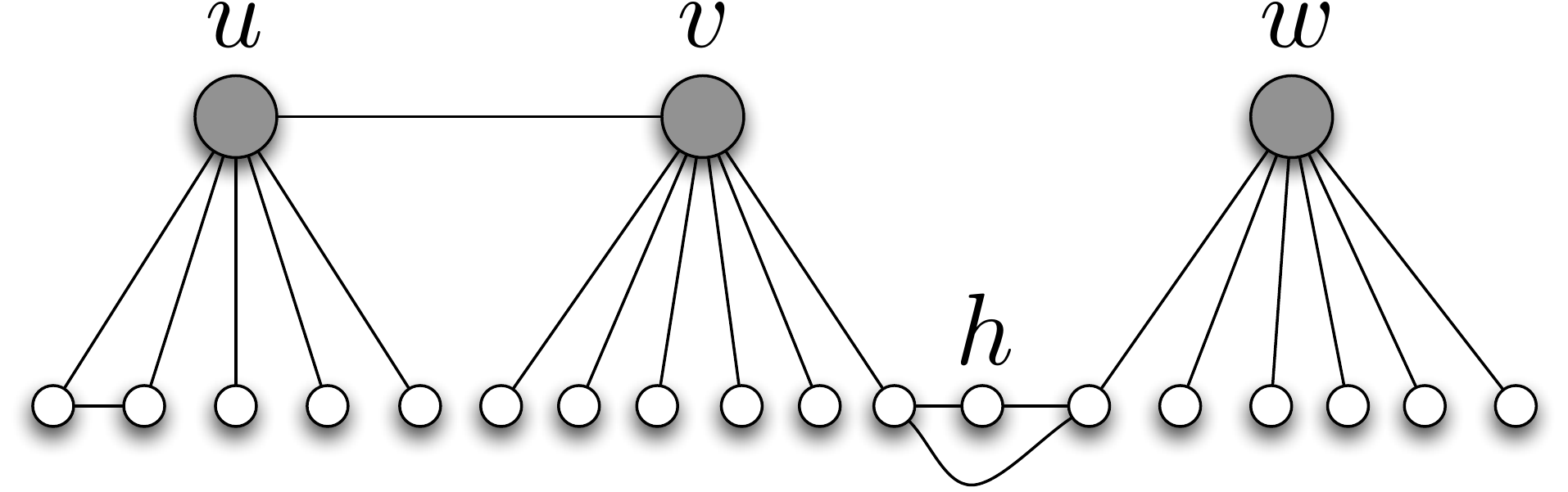} &
\multirow{2}{*}[.8in]{\includegraphics[width=.30\columnwidth]{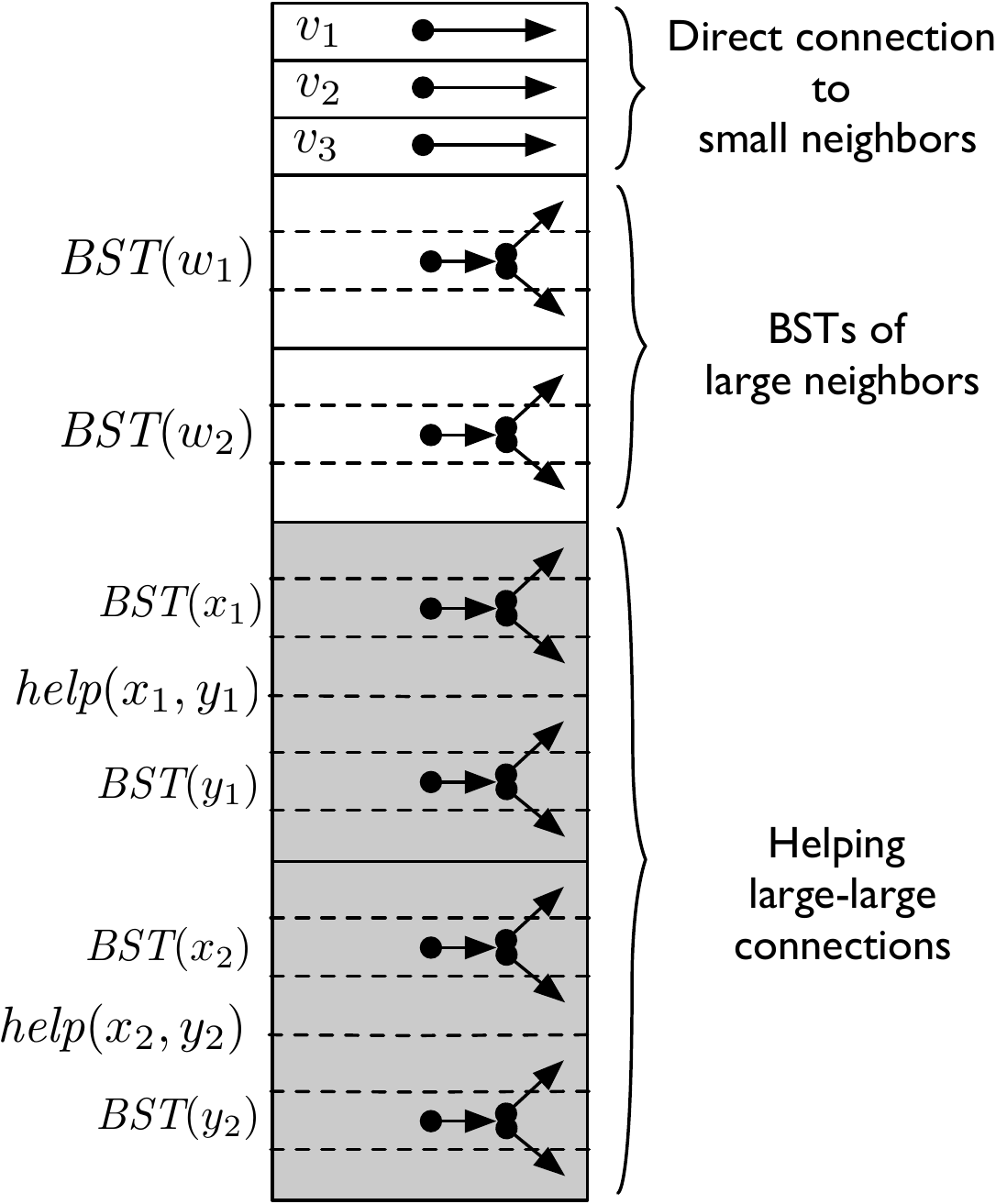}} \\
 & (b) & \\
&\includegraphics[width=.4\textwidth]{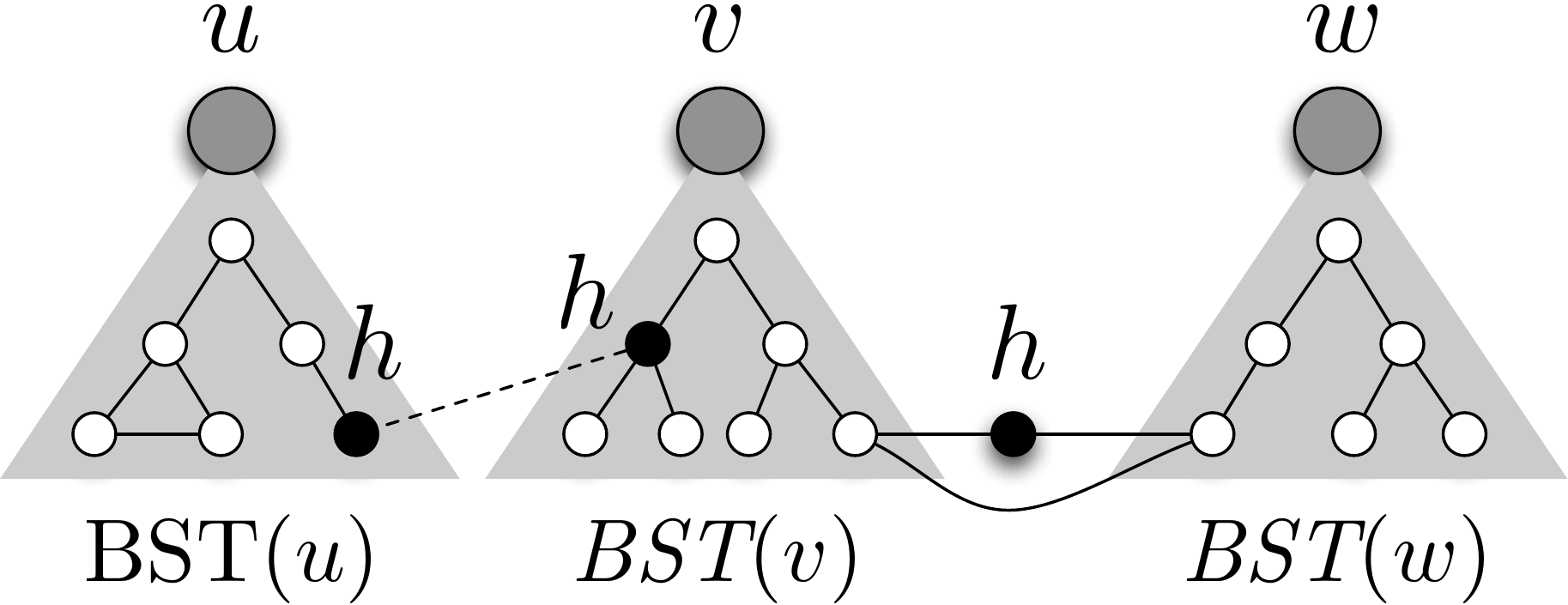} \\
(a) & (c) & (d)
\end{tabular}
\caption{Design principles of a (fixed and adaptive) \system: (a) The sparse demand graph
$\demg(\sigma)$. Nodes are divided into large (\emph{gray}, e.g.,~$u, v, w$) and small (\emph{white}, e.g.~$h$) nodes. (b) Hierarchical representation of the 
demand graph. Problematic edges are 
edges between two large nodes (e.g., ~$(u,v)$). (c) \system: every 
large node~$x$ has an~\emph{ego-}$\BST(x)$ connecting its \emph{working set}~$\wrkset(x)$. Every 
large-large edge, is routed with the help of a small node (acting as 
relay between~\emph{ego-}$\BST$s, \emph{black}). E.g.,~$h$ is the helping node for edge~$(u,v)$ and participates in~\emph{ego-}$\BST(u)$ as a relay 
node toward~$v$ and in~\emph{ego-}$\BST(v)$ as a relay toward~$u$. The resulting network has bounded degree. (d) Forwarding table for a \emph{small} node, see text.}
\label{fig:expamle-high}
\end{figure*}

\noindent \textbf{Concept 4 - Self-Adjustments.} 
$\systems$ perform two types of self-adjustments.
First,~$\systems$ are configured with \emph{self-adjusting}
versions of the underlying ego-BST, in particular, \emph{splay trees}.
In order to update the neighborhood structures and optimize
the network after a routing request, a node~$u$ makes use of
the~$\adjust()$
operation. For example in splay trees,
we issue a \emph{splay} operation on the tree network.
Second, ~$\systems$ keep track of the total size of the nodes' working sets. 
Once the total size, $\sum_v \wrkset(v)$, exceeds 
$n\cdot\theta/2 $, all working sets are \emph{cleared}
(in the spirit of \emph{flush-when-full} or 
\emph{marking} techniques known from
competitive paging~\cite{fiat1991competitive}). 
Such reset operations are necessary to follow temporal locality,
allowing the nodes to update the working sets and hence be 
able to adjust to changing demand patterns.


\noindent \textbf{Concept 5 - Helper Nodes.} 
The problem with the approach described so far is that
while nodes in a single BST are of degree at most three
(\emph{parent}, \emph{left child}, 
\emph{right child}), a \emph{large} node~$v$ can
still appear in multiple trees beside it's own tree if he has \emph{large} nodes in its working set. 
Combined, these trees can induce a large forwarding table on~$v$,
and hence, an additional mechanism is needed to bound the degree.
To this end,~$\systems$ leverage small nodes to help
two communicating large nodes keep the forwarding table
small.
Concretely, as long as the average node size is smaller
than~$\theta/2$ 
, a $\system$ exploits small nodes (of size below $\theta$) 
as \emph{helper nodes}. For example a small mode $h$ may serve as \emph{relay} between two communicating
large nodes~$v$ and~$u$.  Node $h$ will appear in \emph{both} 
trees networks~\emph{ego-}$\BST(v)$ and~\emph{ego-}$\BST(u)$, in \emph{ego-}$\BST(v)$, $h$ will serve as a forwarder toward $u$ and in \emph{ego-}$\BST(u)$, as a forwarder toward $v$. See Figure \ref{fig:expamle-high} for example.
As we will discuss later, this not only allows us to bound the 
size of the forwarding table,
but also to preserve local routing.

\noindent \textbf{Concept 6 - Centralized Bookkeeping and Coordination.}
While reconfiguration is decentralized, 
bookkeeping and coordination is centralized in~$\system$.
This avoids complexities due to possible inconsistencies 
and is efficient: 
a network coordinator (e.g., an arbitrary node in the network) 
only needs to keep
track of which nodes are \emph{large} and which
nodes are \emph{small}. That is, nodes
inform the coordinator when they need to add a new partner to 
their working sets.
Given this information, the coordinator can assign
helper nodes upon request, in an event-driven manner. 
If no helper nodes are left, the coordinator
schedules a~$\reset()$ operation that clears all working sets for all nodes, and set their size to \emph{small}.
Such a reset can be done using a spanning tree, at linear cost.

Figure~\ref{fig:expamle-high} illustrates some of the concepts introduced above, such as small and large nodes, working set,~$\BST$, and helping nodes.

\subsection{Details of Forwarding Table and Reconfiguration Algorithm}

With these intuitions in mind, we now present 
the network reconfiguration and forwarding algorithms
underlying~$\systems$ in details.

\subsubsection{Forwarding Table}

Each node~$u\in V$ maintains the following forwarding table
of given constant size~$6\theta$ (details later)
whose content may change over time.
If~$u$ is \emph{small},~$u$'s forwarding table contains (Figure~\ref{fig:expamle-high} (d)):
	\begin{itemize}
	\setlength{\itemindent}{0pt}
		\item A set~$S(u)=\{v_1,v_2,\ldots\}$ of \emph{small} neighbors of~$u$. For each $v_i$
		$u$ has a \emph{direct, physical} link (port) toward ~$v_i$.
		\item A set~$L(u)=\{\emph{ego-}\BST(w_1),\emph{ego-}\BST(w_2),\ldots\}$ of \emph{ego-BSTs} 
		of \emph{large} neighbors of~$u$.  In each of these trees, say~\emph{ego-}$\BST(w)$,~$u$ will participate and forward messages 
		toward/from the root of the tree,~$w$. Each such tree requires
		 three entries in the forwarding table, and three \emph{physical 
		 ports} (i.e., direct links): 
		 (1) a forward entry to the parent of $u$ in~\emph{ego-}$\BST(w)$, 
		 (2) a forward entry 
		 to the left child of $u$ in~\emph{ego-}$\BST(w)$, and 
		 (3) similarly for the right child in~\emph{ego-}$\BST(w)$.
		\item A set~$H(u)=\{(x_1,y_1),(x_2,y_2),\ldots\}$ of pairs 
		of \emph{large} nodes~$x_i, y_i$ for which~$u$ acts as a \emph{helper}. 
		Helping such a \emph{large-large} connection, requires six entries in the forwarding table and six ports: three entries and ports for each tree, 
		\emph{ego-}$\BST(x_i)$ and~\emph{ego-}$\BST(y_i)$.
	\end{itemize}

\noindent If~$u$ is \emph{large},~$u$'s forwarding table is simpler and contains: 
	\begin{itemize}
	\setlength{\itemindent}{0pt}
		\item A (physical) link to the current root of~\emph{ego-}$\BST(u)$. 
		\item A set of virtual roots to improve the performance of~\emph{ego-}$\BST(u)$.\footnote{The use of virtual roots is
a practical optimization. In a traditional self-adjusting
BST, the root changes over time,
depending on the demand: accessed
elements are moved to the root.
In~$\system$, a node~$u$ uses
a set of virtual pointers to implement
the root of ~\emph{ego-}$\BST(u)$: 
the root is implemented using a 
\emph{constant} set of nodes (all at distance 1),
managed in a first-come-first-serve queue, evicting
the least-recently used (lru) root. 
However, this optimization does
not affect the asymptotic performance of our network.} 
	\end{itemize}

\subsubsection{Roles}

The algorithms underlying~$\systems$ involve four different node roles: 
\begin{itemize}[leftmargin=9pt]
\item \textbf{The Source} (Algorithm~\ref{alg:sender}):
Let~$u$ be the source of a communication
request~$(u,v)$. In case~$u$ is a  large node,
it will simply forward the request to the root of
\emph{ego-}$\BST(u)$ (or directly to~$v$ if it is one of the virtual roots of~\emph{ego-}$\BST(u)$).
In case~$u$ is a small node then: if~$v \in S(u)$ then it will forward it directly to $v$; else,
if $u$ participates in \emph{ego-}$\BST(v)$ then $u$ will forward it to its parent 
in~\emph{ego-}$\BST(v)$. Else, If~$v$ is a new communication partner,~$u$ will notify
the coordinator and request being connected to~$v$. 

\item  \textbf{The Forwarder} (Algorithm~\ref{alg:fwd}):
A forwarder~$x$  is a node which is
neither the source nor the destination of
a communication
request~$(u,v)$, i.e.,~$u\neq x \neq v$. It acts as an inner node in 
the ego-tree network and may also be a helper
(\emph{see below}). 
By our construction, node~$x$ must hence either be part of 
\emph{ego-}$\BST(u)$ or~\emph{ego-}$\BST(v)$, \emph{or both} (if it is a helper). 
If~$x$ is part of~\emph{ego-}$\BST(v)$ of the destination, 
it needs to forward the request toward the root~$v$; else if~$x$ is part of  
$\BST(u)$ of the source, it needs to forward the request to the correct child based on the ID of~$v$.  
If~$x$ is a helper, it belongs to both~\emph{ego-}$\BST(u)$ and~\emph{ego-}$\BST(v)$, 
and additionally needs to initiate~\emph{ego-}$\BST(u).\adjust()$ to update the tree.

\item  \textbf{The Destination} (Algorithm~\ref{alg:destination}):
The behavior of the destination~$v$  of 
a given communication
request~$(u,v)$ is 
simple: it delivers the request to the upper layer
and if needed, triggers
an~$\adjust()$ operation on the~\emph{ego-}$\BST(w)$, for which the packet was delivered:
this optimizes the network to account for recent communications.

\item \ \textbf{The Coordinator} 
(Algorithm~\ref{alg:controller}): The coordinator 
keeps track of which nodes are small, which nodes are large, and which small nodes have room in their 
forwarding table to help large-to-large edges.
To serve an \emph{addRoute}$(u \rightarrow v)$ request,
the coordinator distinguishes between different cases, potentially
resetting the forwarding tables (using \emph{reset}$()$,
see below), adding helper nodes where needed
or rendering the source and/or destination node large
(using \emph{makeLarge}$()$, see below).
In the simplest case, both~$u$ and~$v$ are small and
the coordinator can instruct the two nodes to connect directly.
If one node is large and one small, the route request is served
by inserting one node in the other node's~\emph{ego-}$\BST$. 
Only if both nodes are large, the coordinator finds a helper node,
which is used to relay between the two~\emph{ego-}$\BST$s, which must already exist.
  
When the coordinator learns that a node~$u$
needs to become large, it invokes the \emph{makeLarge}$(u)$ method,
which instructs the creation of~\emph{ego-}$\BST(u)$. On this occasion, the coordinator
iterates over the working set of~$u$: in case of a small neighbor~$v$,
$v$ is inserted into the~\emph{ego-}$\BST(u)$ directly; otherwise, a helper node is
used. 

The coordinator also instructs the nodes to reset
their working sets if no more helper nodes are available,
i.e., if the total sizes of the working sets is~$n \theta/2$ and the network is, what we call, \emph{full}.
Concretely, using the \emph{reset}$()$ method, the coordinator instructs
all nodes to clear their forwarding tables (i.e., working sets).

\end{itemize}

\begin{algorithm}[t!]
\caption{Source~$u$, upon request~$u \rightarrow v$}\label{alg:sender}
	\begin{algorithmic}[1]
		\IF{$u$ is large}
			\STATE \emph{forward} to root of~\emph{ego-}$\BST(u)$
		\ELSE 
			\STATE (* small node *)
			\IF{$v \in S(u)$}
				\STATE \emph{forward directly} to~$v$
			\ELSIF{\emph{ego-}$\BST(v) \in L(u)$}
				\STATE ~\emph{ego-}$\BST(v).\forward(v)$ (to parent)
			\ELSE 
				\STATE (* new partner *)
				\STATE \textbf{notify coordinator:} \emph{addRoute}$(u \rightarrow v)$		
			\ENDIF
		\ENDIF
	\end{algorithmic}
\end{algorithm}


\begin{algorithm}[t!]
\caption{Forwarder~$x$, for request~$u \rightarrow v$}\label{alg:fwd}
	\begin{algorithmic}[1]
		\REQUIRE by definition~$x\in \emph{ego-}\BST(u)$ and/or~$x\in \emph{ego-}\BST(v)$
		\IF{$x\in \emph{ego-}\BST(v)$}
			\STATE~\emph{ego-}$\BST(v).\forward(v)$ (to parent)
			\IF{$x\in \emph{ego-}\BST(u)$}
				\STATE (*~$x$ is an helper to~$(u,v)$ *) 
				\STATE~\emph{ego-}$\BST(u).\adjust()$
			\ENDIF		
		\ELSE
			\STATE (*~$x\in \emph{ego-}\BST(u)$ *) 
			\IF{$\exists$ child~$x$ toward~$v$}
				\STATE~$\emph{ego-}\BST(u).\forward(v)$ (to child)
			\ELSE			
				\STATE \textbf{notify coordinator}: \emph{addRoute}$(u\rightarrow v)$	
			\ENDIF	
		\ENDIF
	\end{algorithmic}
\end{algorithm}

\begin{algorithm}[t!]
\caption{Destination~$v$, upon request~$u \rightarrow v$}\label{alg:destination}
	\begin{algorithmic}[1]
		\STATE process packet
		\IF{request received on some~\emph{ego-}$\BST(w)$ }
			\STATE~\emph{ego-}$\BST(w).\adjust()$
		\ENDIF
	\end{algorithmic}
\end{algorithm}

\newcommand\tind{\hspace{-50 pt}}
\newcommand\ifind{\hspace{10 pt}}

\begin{algorithm}[t!]
	\caption{Coordinator}\label{alg:controller}
	\hspace{-5 pt}  \textbf{\emph{addRoute}}$(u \rightarrow v)$:
	\begin{algorithmic}[1]
		\IF{the total size of all working sets is ~$\theta n/2$}
			\STATE (* network is \emph{full} *)  
			\STATE~$\reset()$
		\ENDIF
		\STATE (* available helper~$x$ must exist  *) 
		\STATE add $v$ to the working set of $u$, $\wrkset(u)$
		\IF{$u$ is small but $\abs{\wrkset(u)} = \theta + 1$ } 
			\STATE \textbf{\emph{makeLarge}}$(u)$
		\ENDIF
		\STATE add $u$ to the working set of $v$, $\wrkset(v)$
		\IF{$v$ is small but $\abs{\wrkset(v)} = \theta + 1$ }
			\STATE \textbf{\emph{makeLarge}}$(v)$
		\ENDIF
		\STATE (* available room in both tables to add edge $(u \rightarrow v)$ *) 
		\STATE \textbf{cases:}
		\STATE \ifind \textbf{if}~$u$ and~$v$ small:
			\STATE \ifind  \ifind $u$ connects directly to~$v$, update~$S(u), S(v)$
		\STATE \ifind \textbf{if}~$u$ small and~$v$ large:
			\STATE \ifind  \ifind  \emph{ego-}$\BST(v).\insrt(u)$, update~$L(u)$
		\STATE \ifind \textbf{if}~$u$ is large and~$v$ is small:
			\STATE \ifind  \ifind  \emph{ego-}$\BST(u).\insrt(v)$, update~$L(v)$
		\STATE \ifind \textbf{if}~$u$ is large and~$u$ is large:
			\STATE \ifind  \ifind find a helper node~$x$
			\STATE \ifind  \ifind \emph{ego-}$\BST(u).\insrt(x \text{ as } v)$
			\STATE \ifind  \ifind  \emph{ego-}$\BST(v).\insrt(x \text{ as } u)$	
		\end{algorithmic}
	\hspace{-5 pt} \textbf{\emph{makeLarge}}$(u)$:
		\begin{algorithmic}[1]
			\STATE create ~\emph{ego-}$\BST(u)$
			\FOR{each~$v\in \wrkset(u)$}
				\IF{$v$ is small}
					\STATE~\emph{ego-}$\BST(u).\insrt(v)$	
				\ENDIF
				\IF{$u$ is large}
					\STATE find a helper node~$x$
					\STATE~\emph{ego-}$\BST(u).\insrt(x \text{ as } v)$
					\STATE~\emph{ego-}$\BST(v).\insrt(x \text{ as } u)$
				\ENDIF
			\ENDFOR		
		\end{algorithmic}
	\hspace{-2 pt}\textbf{\emph{reset}}$()$:
	\begin{algorithmic}[1]
			\FOR{each~$u\in V$}
				\STATE \textbf{inform} to clear~$S(u)$,~$L(u)$,~$\wrkset(u)$
				\STATE \textbf{set}~$u$ to small
			\ENDFOR		
		\end{algorithmic}
\end{algorithm}

\section{Analysis of Static Optimality}\label{sec:analysis}

This section formally proves the properties and performance 
of~$\systems$. To improve readability,
some lemmas and proofs are deferred to the appendix.
We call a communication sequences~$\sigma$ 
\emph{sparse} if it is~$(c,\delta)$-sparse 
for a constant~$c$ and~$\delta = \Omega(c n D)$
(cf~Definition~\ref{def:sparse}),
where~$D$ is the routing cost on the demand-oblivious network 
(henceforth usually assumed to be~$\Theta(\log n)$, the minimum
possible diameter for a scalable, constant-degree network).

We now show that forwarding in~$\systems$ 
 does not require a global routing algorithm and can use arbitrary addressing, and its size is bounded by a constant $\Delta=6\theta=24c$.

\begin{theorem}\label{thm:dynopt}
For any sequence~$\sigma$, 
$\systems$ provide~$\Delta$-compact
and local routing, as well as arbitrary addressing.
\end{theorem}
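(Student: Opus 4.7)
The statement bundles three separate claims: (i) every forwarding table has size at most $\Delta = 6\theta$, (ii) forwarding is local, and (iii) arbitrary (flat) addresses suffice. My plan is to dispatch (ii) and (iii) by inspection and concentrate the real work on (i), where the non-trivial accounting lives.

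For (i), I first handle large nodes: by construction the table of a large node $u$ contains only a pointer to the current root of \emph{ego-}$\BST(u)$ together with a constant-size pool of virtual roots, so the bound is immediate. For a small node $u$, the table is a disjoint union of $S(u)$ (one entry each), $L(u)$ (three entries each: parent, left child, right child on the \emph{ego-}$\BST(w)$ the node participates in), and $H(u)$ (six entries for each helped large-large pair). Since every element of $S(u)\cup L(u)$ corresponds to a distinct partner in $\wrkset(u)$, and $u$ is small, we get $|S(u)|+3|L(u)| \le 3(|S(u)|+|L(u)|) \le 3\theta$. The remaining piece is to show $|H(u)|\le \theta/2$, which gives the missing $6\cdot \theta/2 = 3\theta$ and totals to $6\theta=\Delta$.

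To bound $|H(u)|$, I would use a global counting argument with the coordinator's invariants. The coordinator triggers $\reset()$ the moment $\sum_v |\wrkset(v)|$ would exceed $n\theta/2$, so at all times the number of unordered communication pairs on record is at most $n\theta/4$; in particular, the number of large-large pairs (which are the only ones that ever request a helper) is at most $n\theta/4$. Each node with $|\wrkset(v)|>\theta$ is large, so the number of large nodes is at most $n/(2+2/\theta)<n/2$ and hence at least $n/2$ small nodes are always available. I would then argue that the coordinator's helper-assignment policy can (and should) be implemented as a least-loaded choice among small nodes with residual capacity, so by a straightforward pigeon-hole/load-balancing argument the worst-case load never exceeds the average $\theta/2$ by more than an additive constant, which can be absorbed into $\theta$. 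Combined with the $3\theta$ bound above, this yields $\Delta$-compactness.

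For (ii) I would inspect Algorithms~\ref{alg:sender}--\ref{alg:destination}: each forwarding decision made by a node $x$ reads only its own local state ($S(x)$, $L(x)$, $H(x)$, or its pointers inside the participating ego-BSTs), never querying any remote table; likewise, every $\adjust()$ operation (e.g., a splay) touches only nodes on the access path, and $\insrt()$ only mutates tables of nodes actually involved in the inserted route, so no global re-computation is ever required. For (iii), note that the only operations performed on identifiers are equality tests (in $S(u)$ lookups) and order comparisons (navigating ego-BSTs), both of which require only that identifiers be unique and totally ordered; no positional or topology-encoded addressing is ever used. The main obstacle is (i), and within it specifically the worst-case bound on $|H(u)|$: averaging over small nodes is easy, but turning the average into a per-node guarantee requires committing to an explicit coordinator policy that always has a sub-capacity helper to hand before a reset is forced, which is the one place where care is needed.
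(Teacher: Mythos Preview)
Your proposal is correct and follows essentially the same approach as the paper: a case analysis on small/large endpoints for local routing, the BST-search observation for arbitrary addressing, and for compactness the same counting (at most $n\theta/4$ large--large pairs spread over at least $n/2$ small nodes, yielding $\le \theta/2$ helper slots each, for a total of $3\theta + 6\cdot\theta/2 = 6\theta$). You are in fact slightly more careful than the paper, which in its deferred Lemma~\ref{lem:helping} asserts the per-node helper bound directly from the averages without spelling out the least-loaded assignment policy you correctly flag as necessary.
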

\begin{proof}
\noindent \emph{Local routing.}
The proof of the local routing property
is by construction and
the states of the source~$u$ and the destination~$v$.
For routing a request from a small node~$u$ to a small node~$v$,
the packet is directly forwarded to the destination~$v$.
For routing a request from a small node~$u$ to a large node,
the packet is forwarded to~$v$ by traversing~\emph{ego-}$\BST(v)$ from parent
to parent, until the root of ~\emph{ego-}$\BST(v)$ is reached,
and from there directly to~$v$.
For routing a request from a large node~$u$ to a small node~$v$,
the packet is forwarded to~$v$ by traversing~\emph{ego-}$\BST(u)$ from the root of 
\emph{ego-}$\BST(u)$ downward to~$v$, similar to a classic search on a binary search tree.
For routing a request from a large node~$u$ to a large node~$v$,
the packet is forwarded to~$v$ in two steps. By construction, 
there must exist a helper node~$x$
that participates both in~\emph{ego-}$\BST(u)$ and~\emph{ego-}$\BST(v)$. 
First the request is forwarded on~\emph{ego-}$\BST(u)$ downward 
to~$x$ (which is stored in the tree as~$v$).
Then,~$x$ notes that the destination is~$v$ and forwards
the packet upward to~$v$, on~\emph{ego-}$\BST(v)$. Since all binary search trees 
in the system are maintained locally using the~$\adjust()$  method, 
no global routing algorithm is needed.

\noindent \emph{Compact routing.}
We set the threshold $\theta$ to be twice the largest possible 
average degree in a window of size at most $\delta$, i.e.,~$\theta = 4c$, so every node with
working set size less than~$\theta$ is \emph{small}, and otherwise, 
\emph{large}.
Let ~$\Delta=6\theta$ (a constant) 
and we set the forwarding table to size 
$\Delta$, so a~$\system$ supports compact routing.
We need to show that the forwarding table does not exceed the size $\Delta$.
As long as the coordinator did not call~$\reset()$,
for a \emph{large} node the forwarding table is by design at most $\Delta$, it contains one link to its \emph{ego-}$\BST$ root and a set of links to at most $\Delta-1$ virtual roots.
For a \emph{small} node we prove this in Lemma \ref{lem:helping} that we state later.

\noindent \emph{Arbitrary addressing.} The 
support for arbitrary addressing follow by design, since the search operation in binary search trees can support it naturally. 
\end{proof}

We can now state our main result:
\begin{theorem}\label{thm:staticopt}
For any $(c,\delta)$-sparse communication sequence~$\sigma$, 
where~$\card{\sigma} \ge \delta = \Omega(nD)$,
there is a constant $\Delta$ for which
$\systems$ relying on statically optimal ego-BSTs 
(e.g.~splay trees~\cite{splaytrees}), 
are statically optimal for~$\Delta$-degree bounded networks.
\end{theorem}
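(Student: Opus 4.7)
The plan is to partition $\sigma$ into epochs delimited by $\reset()$ events, bound the routing cost inside each epoch via Sleator--Tarjan's static optimality of splay trees applied to each \emph{ego-}$\BST$, amortize the $\reset()$ cost and the cost of first-time pair insertions (which use the demand-oblivious network) against the epoch length, and finally compare the resulting upper bound to the entropy lower bound of \cite{disc17}.

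First I would control the epoch structure. A $\reset()$ fires when $\sum_v |\wrkset(v)|$ reaches $n\theta/2 = 2cn$; since each first-time appearance of a pair contributes two units to this sum and $(c,\delta)$-sparsity allows at most $cn$ distinct pairs in any window of length $\delta = \Omega(nD)$, every epoch lasts at least $\delta$ requests. Consequently the $O(n)$-cost resets and the at most $cn$ first-use requests per epoch (each of cost $D$ on the oblivious network, plus $O(\log n)$ for insertion into the relevant \emph{ego-}$\BST$) contribute $O(1)$ per request on average, which is dominated by the trivial lower bound $\Cost(\STAT,\netw^*,\sigma)\ge |\sigma|$. For the in-epoch routing cost I would note that every request touching a node $u$ is served by a search on $\emph{ego-}\BST(u)$ or $\emph{ego-}\BST(v)$ (and a large-to-large request uses both trees via a helper, at most doubling the cost). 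Static optimality of splay trees yields, for $\emph{ego-}\BST(u)$, total work of $O(m_u(1+H(\hat Y\mid \hat X=u)))$ per epoch where $m_u$ counts the epoch requests in which $u$ acts as the ``root end''; summing over all ego-trees and over epochs, and using concavity of $H$ to combine per-epoch entropies into the global one, gives a total routing cost of $O(|\sigma|(1+H(\hat Y\mid \hat X)+H(\hat X\mid \hat Y)))$. The lower bound from \cite{disc17} is $\Cost(\STAT,\netw^*,\sigma)=\Omega(|\sigma|\cdot\max(H_\Delta(\hat Y\mid\hat X),H_\Delta(\hat X\mid\hat Y)))$, which, for constant $\Delta$, matches the upper bound up to a constant factor and thus yields a constant competitive ratio $\rho$.

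The main obstacle is the conceptual gap between splay trees (a bound about BST depths and rotations) and the network lower bound (about shortest paths in a degree-$\Delta$ graph); the entropy framing bridges them because both splay trees and the optimal degree-$\Delta$ network achieve cost $\Theta(H)$ per access, with the $\log_2$ versus $\log_\Delta$ discrepancy swallowed by a constant for constant $\Delta$. A secondary obstacle is verifying that the coordinator can always find a helper for a large-to-large pair without violating the $6\theta$ forwarding-table budget; this follows from a counting argument (formalized in Lemma~\ref{lem:helping}), since whenever $\sum_v|\wrkset(v)|<n\theta/2$ the large nodes number at most $n/2$ and the remaining small nodes collectively have $\Omega(cn)$ free helper slots, sufficient to serve all at most $cn$ large-to-large edges that sparsity permits in a single epoch.
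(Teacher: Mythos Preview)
Your outline matches the paper's proof: epoch decomposition by $\reset()$, per-\emph{ego-}$\BST$ application of splay-tree static optimality, amortization of the $O(cnD)$ coordinator/insertion overhead over an epoch of length $\ge\delta$, helper availability via the counting of Lemma~\ref{lem:helping}, and comparison against the conditional-entropy lower bound of~\cite{disc17}. One minor variation: the paper compares per epoch (arguing $\Cost(\STAT,\netw^*,\sigma)\ge\sum_i\Omega(H^{(i)}_{\mathrm{con}})$ because the fixed $\netw^*$ cannot beat the per-window optimum on $\sigma^{(i)}$), whereas you aggregate the per-epoch entropies to the global one via concavity; both routes are valid.

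There is one genuine gap in your sketch. The claim that \emph{ego-}$\BST(u)$ costs $O\bigl(m_u(1+H(\hat Y\mid\hat X{=}u))\bigr)$ is not correct as written: that tree is accessed by \emph{both} outgoing requests $(u,v)$ (key $v$, distributed as $\hat Y\mid\hat X{=}u$) \emph{and} incoming requests $(v,u)$ (key $v$, distributed as $\hat X\mid\hat Y{=}u$), so the splay static-optimality bound involves the entropy of the \emph{mixture} of these two conditionals, not $H(\hat Y\mid\hat X{=}u)$ alone. The paper closes this by first passing to the symmetrized demand $\bar\sigma^{(1)}$ (so the two conditionals at each node coincide) and then invoking Theorem~\ref{thm:symmetirc} and Lemma~\ref{clm:average} to bound the symmetrized conditional entropy by $H^*_{\mathrm{con}}+1$; Lemma~\ref{lem:treeentropy} records that the direction of a request is irrelevant to the splay cost. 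Without this symmetrization step (or an equivalent mixture-entropy inequality), the passage from the per-tree splay bound to your global $O\bigl(H(\hat Y\mid\hat X)+H(\hat X\mid\hat Y)\bigr)$ is unjustified.
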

\begin{proof}
We again set the threshold to be twice the 
average degree~$\theta = 4c$, 
let ~$\Delta=6\theta$ (a constant), 
and set the forwarding table to size $\Delta$.
Let~$\netw^*$ be the optimal~$\Delta$-degree bounded 
network used by the optimal static algorithm~$\STAT(\sigma)$. 
From~\cite{disc17} it follows that the average cost of 
$\STAT$ is lower bounded by: 
\begin{eqnarray}
\Cost(\STAT,\netw^*, \sigma) \ge \Omega \left(\max(H_\Delta(\hat{Y}_\sigma \mid \hat{X}_\sigma), H_\Delta(\hat{X}_\sigma \mid \hat{Y}_\sigma)) \right)
\end{eqnarray}


We will prove static optimality of~$\systems$ in two steps.
First, we will show that the routing cost of a~$\system$ is optimal and 
proportional to its trees adjusting cost. Second, we will bound the cost of 
the operations and messages that are related to the coordinator in the~$\system$. 
Overall, we will show that the amortized cost of a~$\system$ is 

\begin{eqnarray}
\Cost(\system, \emptyset, \sigma) \le O\left(H(\hat{Y}_\sigma \mid \hat{X}_\sigma) + H(\hat{X}_\sigma \mid \hat{Y}_\sigma)\right)
\end{eqnarray}
\noindent making it order optimal since~$\Delta$ is constant
(recall that~$\netw_0=\emptyset$ is an empty initial network).

We divide~$\sigma$ into subsequences, $\sigma^{(i)}$, separated by the $i$'th call to the~$\reset()$ 
operations announced by the coordinator.
If no $\reset()$  is called then $\sigma^{(1)}=\sigma$.
If $\reset()$  was called $k$ times then the last (partial) subsequence is denoted by $\sigma^{(k+1)}$.
We start with the analysis of a single ``window'',~$\sigma^{(1)}$, 
which is the subsequence of~$\sigma$ from the start until the first~$\reset()$ 
operation.
The length of~$\sigma^{(1)}$ is~$\Omega(c n D)$ by assumption on sparsity and it contains exactly 
$c n$ unique requests (see the Coordinator algorithm, Algorithm~\ref{alg:controller}).
We claim the following:
\begin{claim}
$\system$ is statically optimal on~$\sigma^{(1)}$.
\end{claim}
\begin{proof}
Let~$H^{(1)}_{\mathrm{con}} = \max(H(\hat{Y}_{\sigma^{(1)}} \mid \hat{X}_{\sigma^{(1)}}), H(\hat{X}_{\sigma^{(1)}} \mid \hat{Y}_{\sigma^{(1)}}))$ 
be the maximum of the conditional entropies.
We need to show that~$\system$'s cost on~$\sigma^{(1)}$ is~$O(H^{(1)}_{\mathrm{con}})$ 
to prove its optimallity. We separate the cost into two groups:

\noindent \emph{\bf Routing and BST adjustment cost}: 
For analytical reasons, we consider a symmetric 
version of~$\sigma^{(1)}$, named~$\bar{\sigma}^{(1)}$,
which keeps the total number of requests between pairs the same
but divides them half-half:
for each (directed) request~$(u,v)$, we consider that half of the requests
went the other direction,~$(v, u)$, making the number of requests
for a pair~$u,v$ equal in both directions. This makes the frequency 
matrix (a matrix representation of the pair's frequencies in the demand) 
of ~$\bar{\sigma}^{(1)}$ symmetric.  Theorem~\ref{thm:symmetirc}, that we prove later, 
states that~$H(\hat{Y}_{\bar{\sigma}^{(1)}} \mid \hat{X}_{\bar{\sigma}^{(1)}}) 
= H(\hat{X}_{\bar{\sigma}^{(1)}} \mid \hat{Y}_{\bar{\sigma}^{(1)}}) \le H^{(1)}_{\mathrm{con}} +1$.

We consider the cost by node type. For a \emph{small} 
node, if it connects to another \emph{small} node,
then the two nodes have a direct connection that starts from 
the first request and stays active for the whole~$\sigma^{(1)}$
(unless the node becomes \emph{large}, which we address later). 
The amortized cost for such a request is one. If a \emph{small node} 
connects to a \emph{large} node, we charge the cost for routing and 
the cost for adjusting the network
to the large node, which we discuss now.
Each \emph{large} node~$w$ maintains a~\emph{ego-}$\BST(w)$ for its 
communication partners. 
Since~\emph{ego-}$\BST(w)$ is assumed to be a statically optimal datastructure~\cite{splaytrees}
 on all requests for which~$w$ is the source or destination (recall that since~$\bar{\sigma}^{(1)}$ is symmetric, the frequency distribution of destinations
 \emph{from}~$w$,~$Y_w$, and sources \emph{to}~$w$,~$X_w$, are the same),
 it follows that the cost of these requests (routing plus adjustments) 
 is~$O(H(Y_w)) = O(H(X_w))$ (see Lemma~\ref{lem:treeentropy}). 
 This cost includes all~\emph{ego-}$\BST(w).\forward()$,~\emph{ego-}$\BST(w).\insrt()$ 
 and ~\emph{ego-}$\BST(w).\adjust()$ operations. 
Since each routing request involves at most one
forwarding operation by a helping node between two trees (for large-large edges),
the (amortized) cost of routing and tree adjustment is at most 
$H(\hat{Y}_{\bar{\sigma}^{(1)}} \mid \hat{X}_{\bar{\sigma}^{(1)}}) + H(\hat{X}_{\bar{\sigma}^{(1)}} \mid \hat{Y}_{\bar{\sigma}^{(1)}})$, as required.

\noindent \emph{\bf Coordinator messages cost}: 
We discuss the coordinator functions one-by-one:  
\begin{description}
\setlength{\itemindent}{-10pt}
\item[\textbf{\emph{reset()}}:]  Happens once during 
the window~$\sigma^{(1)}$.
The cost is~$n$, to broadcast the reset message to all 
nodes on the fixed network (using a broadcast tree).
 \item[\textbf{\emph{makeLarge()}}:] Happens at most
 once to each node during the window. When \emph{makeLarge}
 is executed at node~$u$, we are guaranteed to have enough helper 
 nodes if needed (since the network is not full yet, see Lemma~\ref{lem:helping}),
 and since we do not add new edges to $u$, we  only replace a constant number of existing
 edges~$(u,v)$  (direct or via tree~\emph{ego-}$\BST(v)$),
 with a new connection via the newly created \emph{ego-}$\BST(u)$ of constant size.
 In each call of \emph{makeLarge},~\emph{ego-}$\BST(u).\insrt()$ is
  amortized (accounting for in the adjustment cost above). The only additional 
  cost is to notify helpers, but the number of helpers is bounded by~$cn$ and sending a message is at most~$O(D)$, so the total cost is~$O(cnD)$.
  \item[\textbf{\emph{addRoute()}}:] Happens exactly~$cn$ times 
  during the window~$\sigma^{(1)}$. The cost of~\emph{ego-}$\BST(u).\insrt()$ 
  and/or~\emph{ego-}$\BST(v).\insrt()$ are  amortized. The only cost is to
  notify the helper node which is at most~$O(D)$. 
  The cost during the window is therefore~$O(cnD)$.
\end{description}
Summing up the total cost of the coordinator messages gives~$O(cnD)$. Since the 
number of requests in the window is $\delta = \Omega(c n D)$, the amorized
cost per coordinator request is~$O(c')$, for a constant~$c'$.
To this we need to add for each \emph{ego-}$\BST(w)$ its amortized cost for routing and adjusting, but this as mention is proportional to 
$O(H(Y_w)) = O(H(X_w))$. Therefore the total amorized cost 
for the window (including routing, adjusting and coordinator messages) is 
$O(H^{(1)}_{\mathrm{con}})$.
\end{proof}

To conclude the proof of  Theorem~\ref{thm:staticopt},
we divide~$\sigma$ into subsequences~$\sigma^{(i)}$,
 separated by~$\reset()$ operations.
 A lower bound for~$\STAT(\sigma)$ is:
~$$\Cost(\STAT,N^*, \sigma) \ge \sum_i \Omega (H^{(i)}_{\mathrm{con}})$$
 While the cost for~$\system$ is:
~$$\Cost(\system,\emptyset, {\sigma}) \le \sum_i O (H^{(i)}_{\mathrm{con}})$$
\noindent which makes~$\system$ statically optimal.
\end{proof}
Note that the coordination cost of the last subsequence (which may be shorter than $\delta$), can be  amortized by  
coordination cost of $\sigma^{(1)}$ so its amortized cost is also a constant.

Observe that the cost of~$\system$ could be much lower than 
the cost of~$\STAT$, since~$\STAT$ is also lower bounded by the conditional entropy 
of the whole demand~$\sigma$, and not only the sum of entropies of the windows.   
 
\begin{theorem}
The amortized cost of~$\system$ can be up to~$\log n$ times lower than 
the cost of ~$\STAT$.
\end{theorem}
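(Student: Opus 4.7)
The plan is to exhibit a concrete communication sequence $\sigma$ that separates $\system$ from $\STAT$ by a $\Theta(\log n)$ factor. The intuition is that $\system$ pays only the conditional entropy of the \emph{current} reset window, while $\STAT$ is locked into a fixed topology and must pay the conditional entropy of the \emph{entire} demand. So I want a demand whose per-window conditional entropy is $O(1)$ but whose aggregate conditional entropy is $\Theta(\log n)$.

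First, I would fix a round-robin tournament schedule on $n$ nodes, producing $n-1$ edge-disjoint perfect matchings $M_1,\dots,M_{n-1}$ whose union is $K_n$. I would then build $\sigma$ as the concatenation of phases $\phi_1,\dots,\phi_{n-1}$, each of length $\delta=\Theta(cnD)$, where $\phi_i$ consists of requests drawn uniformly at random (or round-robin) from the $n/2$ pairs of $M_i$. Since only $n/2$ unique pairs appear in any single phase, the sequence is $(c,\delta)$-sparse with $c=O(1)$, and any $\theta/2=O(1)$ consecutive phases together involve only $O(n)$ unique pairs, consistent with the sparsity regime of Theorem~\ref{thm:staticopt}.

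Next I would bound the two costs. On the $\system$ side: inside any reset window (at most $\theta/2 = O(1)$ consecutive phases), every source is paired with only $O(1)$ distinct destinations, and within each individual phase the matching is deterministic so $H(\hat{Y}\mid\hat{X})=0$; therefore $H^{(i)}_{\mathrm{con}}=O(1)$ for each window $i$. By Theorem~\ref{thm:staticopt}, $\Cost(\system,\emptyset,\sigma)=O(1)$ per request. On the $\STAT$ side: aggregating over all $n-1$ phases, each node appears as source to each of the other $n-1$ nodes with equal frequency, hence $H(\hat{Y}\mid\hat{X})=H(\hat{X}\mid\hat{Y})=\log(n-1)=\Theta(\log n)$, and the lower bound from~\cite{disc17} forces $\Cost(\STAT,\netw^*,\sigma)=\Omega(\log n)$. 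Taking the ratio yields the claimed $\Omega(\log n)$ separation.

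The main obstacle is verifying that the coordination and reset overhead of $\system$ does not secretly inflate its per-request cost beyond $O(1)$, thereby closing the gap. This is already resolved by the amortization argument inside the proof of Theorem~\ref{thm:staticopt}: since each phase has length $\Omega(cnD)$, the $O(cnD)$ coordinator messages and the $O(n)$ broadcast cost of $\reset()$ amortize to $O(1)$ per request, so $\system$'s bound of $O(1)$ is genuine. A secondary subtlety is ensuring that the round-robin schedule gives exactly $n-1$ disjoint matchings for even $n$ (with a standard adjustment for odd $n$), and that uniformity within a phase implies a truly uniform destination distribution for every source in the aggregate demand — both are routine combinatorial checks.
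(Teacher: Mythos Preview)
Your proposal is correct and follows the same high-level strategy as the paper: exhibit a sequence whose per-window conditional entropy is $O(1)$ (so $\system$ pays $O(1)$ per request by Theorem~\ref{thm:staticopt}) while the aggregate demand is uniform over all pairs (so $\STAT$ pays $\Omega(\log n)$ by the conditional-entropy lower bound).

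The difference is purely in the concrete construction. The paper concatenates $n$ phases whose demand graphs are \emph{distinct two-dimensional grids}, each of length $\Theta(n\log n)$; since every node has at most four grid neighbors, the per-phase conditional entropy is at most~$2$, and the paper asserts (somewhat informally) that a round-robin choice of grids can make the overall demand uniform. Your construction instead uses the $n-1$ edge-disjoint perfect matchings of a round-robin tournament. This buys you two things: the per-phase conditional entropy is exactly~$0$ (each source has a unique partner), and the aggregate uniformity is automatic because the matchings partition $K_n$ exactly---no hand-waving about whether $n$ grids can be chosen to cover all pairs evenly. Conversely, the paper's grid phases already have average degree close to~$4$, which aligns naturally with the $\theta=4c$ threshold and the reset boundaries, whereas your matchings fill working sets more slowly so a reset window spans $\Theta(\theta)$ phases; you handle this correctly by noting that a constant number of matchings still gives $O(1)$ partners per node and hence $H^{(i)}_{\mathrm{con}}=O(1)$.
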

\begin{proof}
Consider for example a demand~$\sigma$ that is 
the concatenation of~$n$ demand
subsequences~$\sigma^{(1)},\sigma^{(2)},\ldots, \sigma^{(n)}$. Each demand 
$\sigma^{(i)}$ is of length~$\Theta(n \log n)$, 
is sparse and has a demand graph which is a (different) two-dimensional grid.
Therefore the amortized cost of~$\system$ for each 
$\sigma^{(i)}$ is \emph{constant}. But if the~$\sigma^{(i)}$ is \emph{different} each time
(e.g., selected round robin),
then~$\sigma$ could be made to be uniform, where overall each source communicates
to all destinations with equal frequency (over the entire~$\sigma$). This will force a lower bound on~$\Omega(\log n)$, the entropy of the uniform distribution, for~$\STAT$,.
\end{proof}

%


\section{Related Work}\label{sec:relwork} 


Datacenter networks have become a critical 
infrastructure and especially the popularity of
online services~\cite{survey2017datacenter}
(e.g., web search, social networks, storage, financial services,
multimedia, etc.) 
has led to a fast increase of 
datacenter traffic~\cite{singh2015jupiter,cisco-pro}.
Many applications (such as scatter-gather
and batch computing applications) generate much
\emph{internal} datacenter traffic,
and consequently, the traffic staying
inside the datacenter is often much larger than
the traffic entering or leaving the datacenter~\cite{survey2017datacenter}.
It is hence not surprising that the design of effective 
and efficient (also in terms of cost and cabling)
datacenter networks  
has received much interest over the last years~\cite{clos,dcell,xpander,bcube,jellyfish,mdcube,vl2,f10,fat-free,besta2014slim,Longhop}.
The situation has recently been compared to 
the early 1980s, when many new interconnection
network designs were proposed~\cite{akella2015universal},
not for datacenters, but for parallel computers.

The advent of technologies for reconfigurable 
(a.k.a.~malleable~\cite{singla2010proteus}) 
networks introduces an additional degree of freedom
to the datacenter network design problem~\cite{singla2010proteus,projector,Jia2017,reactor, helios, firefly, zhou2012mirror,augmenting,chen2014osa}.
By relying on movable antennas~\cite{augmenting},
mirrors~\cite{firefly,zhou2012mirror}, and ``disco-balls''~\cite{projector},
novel technologies in the context of optical circuit switches~\cite{helios,rotornet,reactor,mordia},
60 GHz wireless communication~\cite{zhou2012mirror,kandula2009flyways}, 
free-space optics~\cite{firefly,projector},  
provide unprecedented topological flexibilities, allowing to 
adapt the topology to traffic demands.

Indeed, the physical topology is the next frontier 
in an ongoing effort to render communication networks more
flexible and reconfigurable. Over the last years, 
reconfigurable technologies (typically software) already
enabled various innovations in important domains such as 
traffic engineering~\cite{swan,b4}, load-balancing~\cite{ananta,maglev}, 
and switching~\cite{p4,smartnic}.

While the discussions on the benefits and limitations of such technologies
are still ongoing~\cite{singla2010proteus},
the community has identified a number of benefits of more flexible networks. 
While full bisection bandwidth allows for a
flexible placement and scale-out of applications across clusters~\cite{singh2015jupiter,vl2,clos,Roy:2015},
the cost of (and energy consumed in) traditional 
networks is high, and 
today many datacenters are over-subscribed 
(e.g., fat-trees or folded Clos networks
use a subset of possible roots only).
Some empirical evaluations show that for certain workloads,
a demand-aware network 
can achieve a performance similar to a demand-oblivious
full-bisection bandwidth network at 25-40\% lower cost~\cite{firefly,projector}. 
Empirical studies also confirm that communication patterns 
are often \emph{sparse} and of \emph{low entropy}, which can be exploited in demand-aware networks:
in~\cite{projector}, it is shown that a high percentage of rack pairs
does not exchange any traffic at all, while less than 1\% of them account
for 80\% of the total traffic. 
In general, 
most bytes
are delivered by large flows~\cite{vl2,judd2015attaining,alizadeh2010data}.
Furthermore, the need for reconfigurability is motivated
by empirical studies showing the difficulty of estimating traffic matrices 
and predicting the future~\cite{kandula2009nature,cao2016joint,hu2015coarse}. 

We also note that the study of reconfigurable networks is not limited
to datacenter networks. Interesting use cases also arise
in the context of wide-area networks~\cite{Jia2017,jin2016optimizing}
and, more traditionally, in the context of overlays~\cite{scheideler2009distributed,ton15splay,ratnasamy2002topologically}.

Existing self-adjusting demand-aware network designs
typically rely on some estimate or snapshot of the traffic demands, 
from which an optimized network topology is (re)computed
periodically
(often using exact algorithms or heuristics)~\cite{ancs18,singla2010proteus,fat-free}.
In contrast, inspired by self-adjusting datastructures, 
we in this paper present a more refined model, 
accounting also for the reconfiguration
costs, and allowing us to study (within our model) 
the tradeoff between the 
costs and benefits of reconfigurations.
We presented a survey and taxonomy of the problem space 
in~\cite{ccr-dan}, and adopt the terminology
introduced there.
However, in contrast to the current paper, 
no technical results were derived in the survey paper.
Moreover, in contrast to existing algorithms relying
on mixed integer programming, our algorithms
are efficient (\emph{polynomial-time}), and in contrast to
existing algorithms relying on heuristics, our approach comes
with \emph{provable} guarantees, even over time.
Moreover, 
existing solutions to provide on-demand
bandwidth between communicating servers or top-of-the-rack
switches are sometimes limited to (or at least prioritize
in their opportunistic network~\cite{projector}) 
\emph{direct} connections~\cite{fat-free}. 
We believe that local routing (as supported by~$\systems$) 
is an intriguing property that can simplify 
multihop routing, by
supporting topological adjustments,
without the need for global route recomputations.

In terms of formal guarantees, 
an upper bound on what can be achieved in terms of 
statically optimized demand-aware networks
is due to Avin et al.~\cite{disc17}, who build 
upon initial insights in~\cite{ton15splay,infocom19splay}. 
We in this paper leverage 
the degree reduction technique of~\cite{disc17},
however, to derive a very different result.
The fixed demand-aware network designs by
Avin et al.~\cite{disc17} have recently also
been extended to optimize for load, in addition to
route lengths~\cite{infocom19dan}. 

SplayNets~\cite{ton15splay,infocom19splay} also
rely on splay trees to adjust the network, 
and dynamically adapt to changing traffic patterns. However,
besides their convergence properties under specific
fixed demands,
these networks do not provide any
optimality but only heuristic guarantees. In fact, as we discussed, this is an inherent limitation, 
as static optimality is impossible to achieve based on single tree networks.
Indeed, to the best of our knowledge, 
so far, no result existed on how to actually match 
the lower bound provided in~\cite{ton15splay},
without perfect knowledge of the demand. 

We also note that our approach of reconfiguring
network topologies to reduce communication
costs, is orthogonal to optimization approaches changing the
traffic matrix itself (e.g.,~\cite{fibium}) or 
migrating communication endpoints on a fixed topology~\cite{disc16}.


\section{Conclusion}\label{sec:conclusion}

This paper presented the first 
self-adjusting network which provides 
entropy-proportional
(and hence statically optimal) route lengths
and reconfigurations,
constant-sized forwarding tables and local routing.
Our approach leveraged an intriguing connection to 
self-adjusting datastructures, leveraging self-adjusting
BSTs as building blocks (i.e., per-source ``ego-trees''), 
and combining them to a \emph{network}.

We believe that our work opens several interesting
directions for future research. 
In particular, an intriguing open question from our work  
regards the design of self-adjusting DANs which
optimize metrics related to \emph{temporal}
locality, such as working sets, or even achieve 
dynamic optimality in specific settings.
More generally, we believe that self-adjusting
networks can be of interest beyond the datacenter
context considered here; for example, it may be interesting
to explore self-adjusting peer-to-peer overlays.


{\balance
\footnotesize 
\renewcommand{\baselinestretch}{.95}

}

\appendix

\section{Intuition and Examples}\label{sec:motivation}

This section establishes the connection between 
entropy of demand and route lengths in fixed
demand-aware networks, which lies
at the heart of the static optimality of the
self-adjusting DANs presented in this paper. 
While a formal proof appeared in~\cite{disc17},
the objective of this section is to provide intuition
and examples, as well as first empirical results.

In the following, the route length to serve request~$\sigma_t=(u,v)$,
is given by the hop distance~$\d_{\netw_t}(u,v)$ 
from~$u$ to~$v$, along the routing path chosen
by the algorithm over~$\netw_t$. If not specified otherwise we assume that routes are along shortest paths.
Furthermore, 
we will sometimes consider certain time intervals of the request sequence,
and we will denote the requests occurring at any time~$t$ where~$t_1\leq t \leq t_2$ for
two times~$t_1 \leq t_2$ by~$\sigma[t_1,t_2]$. (Hence~$\sigma=\sigma[0,\infty]$.) Moreover, 
we can again think of a subsequence 
$\sigma'\subseteq \sigma$,
as a directed and weighted \emph{demand graph} (or guest graph)~$G(\sigma')=(V(\sigma'),E(\sigma'))$. 
To give an example, Figure~\ref{fig:example2} \emph{(a)} and 
\emph{(b)}  
show two different demand graphs,
each for a different time interval, $t_1$ and $t_2$ receptively. Since the communication traffic 
is changing over time, the demand graph also changes 
(the edges and their weights). 

\subsection{Changing Demands Require Reconfigurations}

If the communication patterns are sufficiently
different, it can make sense to reconfigure the 
adaptive demand-aware network,~$\netw_t$, 
as well. In general and ideally, each ~$\netw_t$ should be optimized 
toward the request graph \emph{of the future}. In this example, 
for simplicity, we set the maximum allowable degree of
$\netw_t$ to be two (i.e., we only have cycles and lines). 
Note that the network should also support multihop routing,
For example in Figure~\ref{fig:example2} \emph{(b)}, node 1 communicates 
with three partners  but its degree is bounded by two,
so it must also know where to forward messages toward node 4, for example.

\begin{figure}[t]
\centering
\begin{tabular}{cc}
\includegraphics[height=.7\columnwidth]{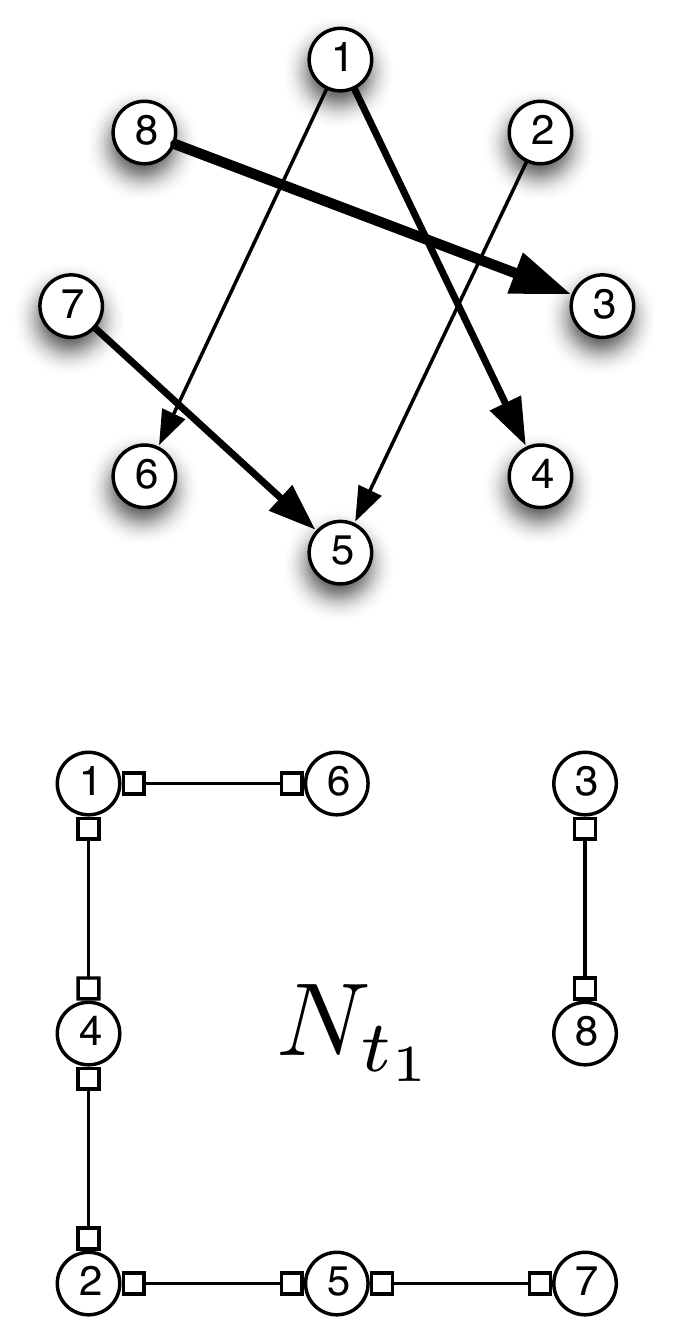} &
\includegraphics[height=.7\columnwidth]{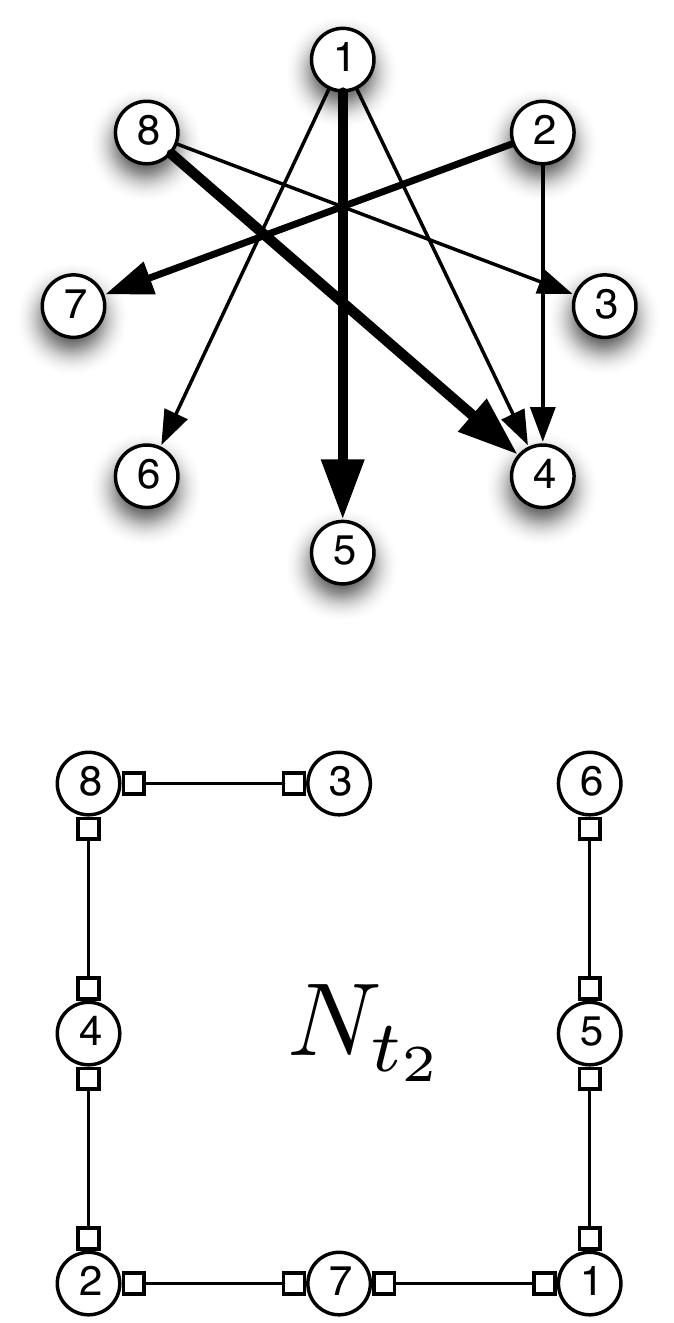} \\
(a) & (b) 
\end{tabular}
\caption{\textbf{The need for reconfigurations.}
 In the upper part, 
two (directed, weighted) demand graphs are shown for two different times: 
(a)~$G(\sigma[t'_1,t_1])$
 and (b)~$G(\sigma[t'_2,t_2])$.
In the lower part, their corresponding demand-aware networks  
~$\netw_{t_1}$ and~$\netw_{t_2}$ of bounded degree two are shown.
(a) In~$\netw_{t_1}$ every node is a direct neigbhor of its communication partner.
(b) In~$\netw_{t_2}$ multihop routing and changing forwarding tables are needed.}
\label{fig:example2}
\end{figure}

\subsection{Demand-Oblivious vs Demand-Aware Networks}

\begin{figure}[t]
\centering
\begin{tabular}{cc}
\includegraphics[height=.65\columnwidth]{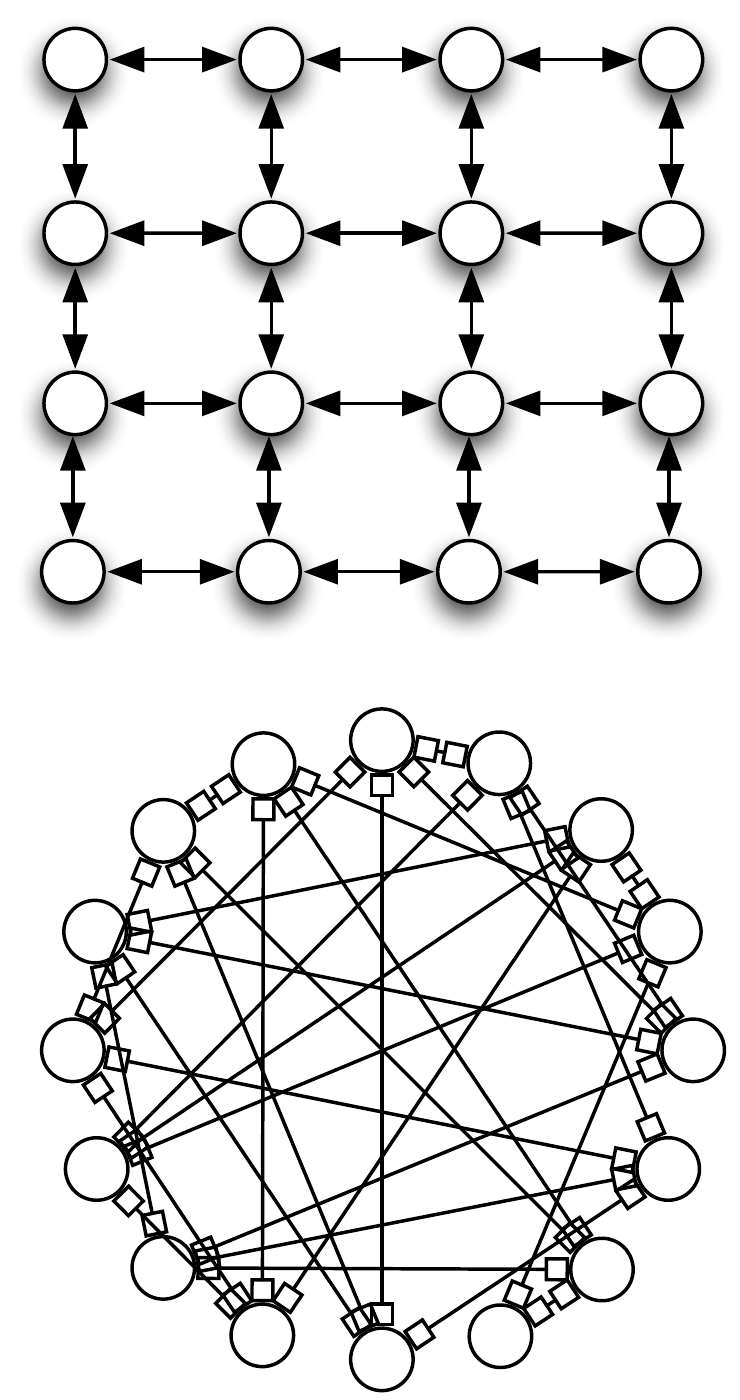} &
\includegraphics[height=.65\columnwidth]{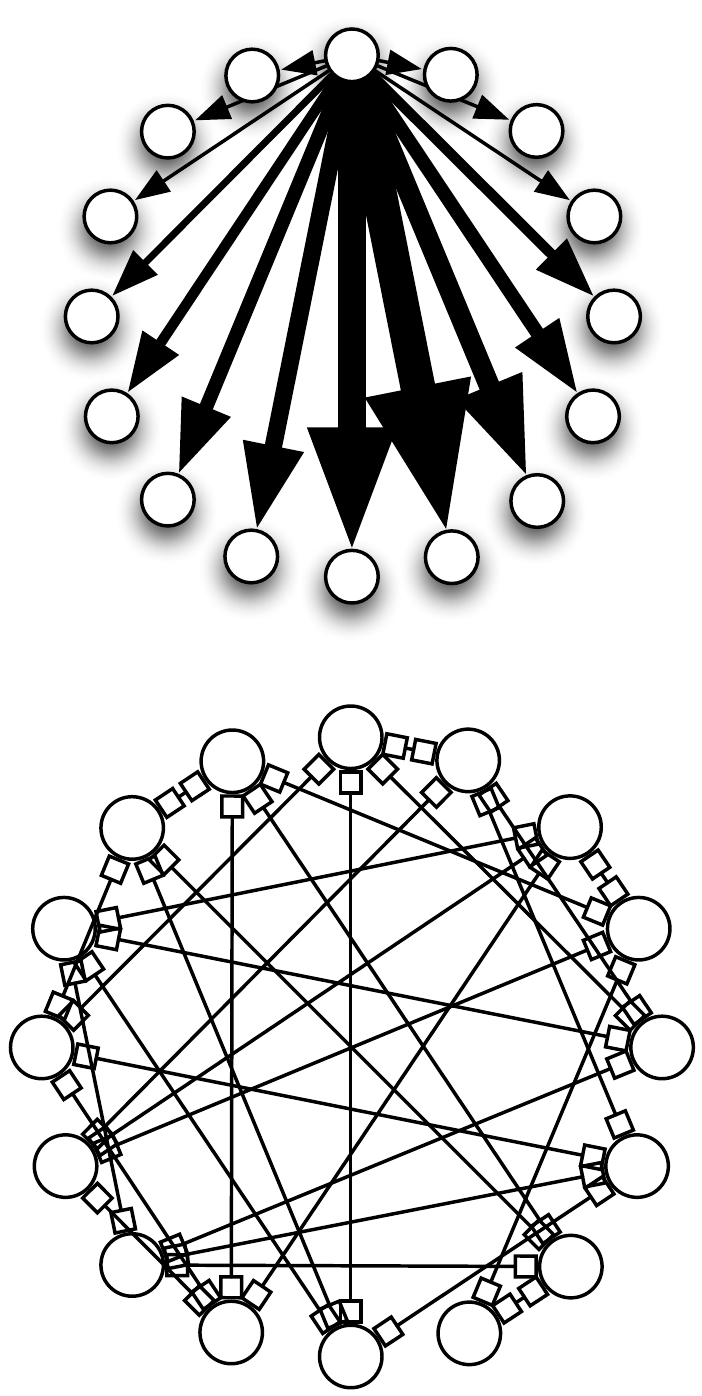} \\
(a) & (b) 
\end{tabular}
\caption{\textbf{Expander networks do not achieve optimal average 
route lengths for sparse demand graphs.}
(a) Oblivious embedding of a 2-dimensional grid demand graph (upper graph) on a constant degree expander network (lower graph) will result in average
route lengths of
$\Omega(\log n)$, while the conditional entropy of the demand graph 
is less than two.
(b) Oblivious embedding of a weighted star demand graph 
on a constant degree expander network will result in 
an average route length of~$\Omega(\log n)$ while the conditional 
entropy of the demand graph could be much lower.}
\label{fig:example3}
\end{figure}

First, we note that the route lengths in demand-oblivious networks 
(such as state-of-the-art expander networks~\cite{hoory,xpander}) cannot be proportional
to the conditional entropy
 and hence cannot provide
the desired solution
(even in the presence of traffic engineering flexibilities~\cite{fat-free}). 
To give a simple example (for the sake of simplicity and 
clarity we leave some of the details out),
consider a workload describing a communication
pattern~$\sigma$ whose demand graph~$\demg(\sigma)$
forms a two-dimensional square grid, of size
$\sqrt{n}\times \sqrt{n}$, see Figure~\ref{fig:example3}~\emph{(a)}.
For this sequence~$\sigma$,~$H(\hat{X})$ and  
$H(\hat{Y})$ are of order~$\log n$, since the frequency of sources and destination is uniform which results in \emph{the maximum possible entropy}.
Embedding this workload 
on a static expander 
in an demand-oblivious way  (i.e., random, or arbitrary)
will result in an 
average route length also in the order of~$\log n$, which is
the diameter of a bounded degree expander. 
However, 
since every node has at most four neighbors, the conditional entropy 
(both~$H(\hat{Y}|\hat{X})$ and~$H(\hat{X}|\hat{Y}))$ is 
only \emph{2}: a gap of~$\Theta(\log{n})$.  A demand-aware design could achieve this bound. 

Another example introducing a large gap of~$\Theta(\log{n})$
between demand-oblivious and demand-aware networks 
is a demand graph~$\demg(\sigma)$ which forms a star 
(with \emph{unbounded} degree), see Figure~\ref{fig:example3}~(b): 
node pairs communicate at different frequencies (skewed distribution,
as indicated by the thickness). For this demand, the conditional entropy could be much lower than
$\log n$ which will be the cost of serving this demand on an demand-oblivious expander.

More generally, one can see that 
every sparse communication pattern
which is
embedded on a demand-oblivious expander,
will result in average route lengths in the order of~$\Omega(\log{n})$, 
the diameter,
regardless of the entropy or the conditional entropy of the demand. 

In order to illustrate the potential gap between the upper bound of entropy (e.g.,$H(\hat{X})$) and lower
bounds of conditional entropy (i.e., $H(\hat{X}|\hat{Y}))$ with some concrete numbers, we plot in Figure~\ref{fig:facebook} the empirical
entropy as well as the conditional empirical
entropy of~$3M$ routing requests
from a Facebook datacenter trace~\cite{roy2015inside}.
The demand~$\sigma$ consists
of~$n=13,748$ communication partners, note that $\log n =13.74$ for this case
(we consider the binary logarithm).
 The figure considers 
 times~$t$ that are multiplicatives of~$100K$.
 For each time 
~$t$, the measures are presented both for the full range 
~$\sigma[1,t]$ (labelled~$H$) as well as for a time window of 
 the last~$100K$ 
 requests~$\sigma[t-100K,t]$ (labelled~$H_W$), 
 to shed light on the \emph{temporal locality}.
 Clearly the conditional 
 entropy is lower than the entropy, and in particular the 
 conditional entropy of the window is much less than 
 the entropy of~$X$ and~$Y$. 
This indicates that demand-aware designs 
could reduce the average route length in the network.


\subsection{Limitations of Tree Networks}


In order to be statically optimal,
a \emph{self-adjusting} network hence needs to achieve the conditional entropy bound,
\emph{without knowledge of~$\sigma$},
but using reconfigurations (which come at a \emph{cost}), 
in an \emph{online} manner. 
Note that while the lower bound only holds for \emph{fixed} topology networks, 
a self-adjusting network can in principle perform much better.

The best upper bound known so far  for \emph{self-adjusting} networks
is~$O(H(\hat{X}) + H(\hat{Y}))$, 
where~$H(\hat{X})$ and~$H(\hat{Y})$ are the
empirical entropies of sources and 
destinations in~$\sigma$, respectively~\cite{ton15splay}. 
It is achieved by a self-adjusting \emph{tree} 
network. While this is optimal for some distributions, 
in particular for \emph{product distributions}, in general, 
it is far from optimal.

Moreover, it is important to note that, 
following the results in~\cite{ton15splay}, if such an almost optimal 
self-adjusting network exists, then it cannot be a bounded degree \emph{tree}. 
The design we present in this work is far from a tree, 
in fact it is based on a network which is a union of trees.


\begin{figure}[t]
\centering
\includegraphics[width=.6\columnwidth]{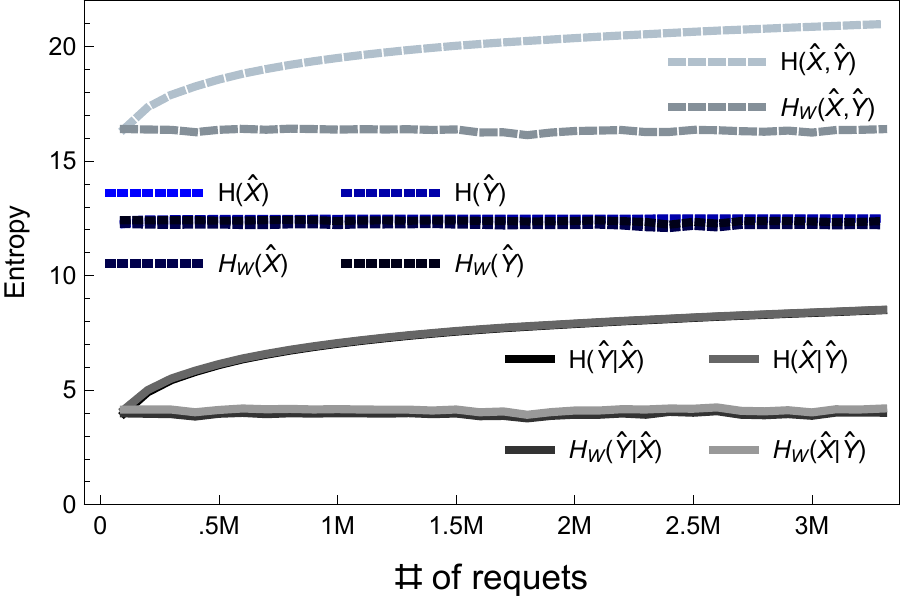}
\caption{Entropy measures  
 in Facebook's workload.} 
\label{fig:facebook}
\end{figure}

\section{Deferred Proofs}

\subsection{Deferred Analysis of Symmetric Matrix}

Let~$M=M(\sigma)$ be a joint (non-symmetric) frequency matrix 
resulting from~$\sigma$. 
Let~$H_M(Y \mid X)$ denote the conditional 
entropy of~$Y$ given~$X$ under the joint probability distribution~$M$. 
By definition,~$H_M(Y \mid X) = H(\hat{Y}_\sigma \mid \hat{X}_\sigma)$.
Let~$H^*_{\mathrm{con}}=\max(H(\hat{Y}_\sigma \mid \hat{X}_\sigma), 
H(\hat{X}_\sigma \mid \hat{Y}_\sigma))$, the maximum of both possible 
conditional entropies. Let~$\bar{M} = (M+M^T)/2$ be the  
symmetric version of~$M$. The conditional entropies of the symmetric and non-symmetric distributions 
are related as stated in the following theorem: 

\begin{theorem}\label{thm:symmetirc} 
The conditional entropy of the symmetric matrix~$\bar{M}$ cannot be much larger than the maximal conditional entropy of~$M$.
\begin{align}
H_{\bar{M}} (Y \mid X) = H_{\bar{M}} (X \mid Y) \le H^*_{\mathrm{con}} +1
\end{align}
\end{theorem}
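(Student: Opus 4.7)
}
The plan is to expand $H_{\bar{M}}(Y\mid X)=H_{\bar{M}}(X,Y)-H_{\bar{M}}(X)$, upper-bound the joint entropy using the fact that $\bar{M}$ is a $50/50$ mixture of $M$ and $M^T$, and lower-bound the marginal entropy using concavity. First, observe that $\bar{M}(x,y)=\bar{M}(y,x)$, so its two marginals coincide; consequently $H_{\bar{M}}(Y\mid X)=H_{\bar{M}}(X\mid Y)$, and it suffices to bound either quantity. Let $p_X(\cdot)=f(\cdot,\cdot)_{\text{src}}$ and $p_Y(\cdot)$ denote the source and destination marginals of $M$, so that the common marginal of $\bar{M}$ is $\bar{p}(\cdot)=(p_X(\cdot)+p_Y(\cdot))/2$.

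Next I would apply two standard information-theoretic inequalities. For the joint, $\bar{M}$ is the uniform mixture of the two distributions $M$ and $M^T$ on $V\times V$; since swapping coordinates does not change entropy, $H(M^T)=H_M(X,Y)$. The mixture inequality $H\!\left(\sum_i \lambda_i p_i\right)\le H(\lambda)+\sum_i \lambda_i H(p_i)$ with $\lambda=(1/2,1/2)$ therefore gives
\begin{equation}
H_{\bar{M}}(X,Y)\;\le\;1+H_M(X,Y).
\end{equation}
For the marginal, concavity of Shannon entropy yields
\begin{equation}
H_{\bar{M}}(X)\;=\;H(\bar{p})\;\ge\;\tfrac{1}{2}\bigl(H(p_X)+H(p_Y)\bigr)\;=\;\tfrac{1}{2}\bigl(H_M(X)+H_M(Y)\bigr).
\end{equation}

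Subtracting these two bounds gives
\begin{equation}
H_{\bar{M}}(Y\mid X)\;\le\;1+H_M(X,Y)-\tfrac{1}{2}\bigl(H_M(X)+H_M(Y)\bigr).
\end{equation}
Finally I would use the chain rule twice, $H_M(X,Y)=H_M(X)+H_M(Y\mid X)=H_M(Y)+H_M(X\mid Y)$, and average the two expressions:
\begin{equation}
H_M(X,Y)-\tfrac{1}{2}\bigl(H_M(X)+H_M(Y)\bigr)=\tfrac{1}{2}\bigl(H_M(Y\mid X)+H_M(X\mid Y)\bigr)\;\le\;H^*_{\mathrm{con}},
\end{equation}
which plugs into the previous display to yield $H_{\bar{M}}(Y\mid X)\le H^*_{\mathrm{con}}+1$, as claimed.

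The only point requiring care is getting the direction of the two entropy-of-mixture inequalities right: concavity gives the desired \emph{lower} bound on the marginal entropy $H_{\bar{M}}(X)$, while the Jensen-type bound with the extra $H(\lambda)=1$ bit gives the matching \emph{upper} bound on the joint entropy $H_{\bar{M}}(X,Y)$. Everything else is bookkeeping with the chain rule and the observation that $H^*_{\mathrm{con}}$ dominates the average of the two conditional entropies.
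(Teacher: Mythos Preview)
Your argument is correct and is essentially the approach the paper has in mind: the paper's proof reduces to Lemma~\ref{clm:average}, whose middle inequality (concavity) is exactly your lower bound on $H_{\bar M}(X)$ applied to the pair $(p_X,p_Y)$, and whose right inequality is your mixture upper bound on $H_{\bar M}(X,Y)$ applied to the pair $(M,M^T)$; the remaining ``simple entropy algebra'' the paper alludes to is precisely your chain-rule averaging step. Your write-up is in fact more explicit than the paper's sketch, but the ingredients and the route are the same.
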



The proof of the theorem mainly follows from the follwoing Lemma that is based on the concavity of entropy~\cite{cover2012elements}
and simple entropies algebra.

\begin{lemma}\label{clm:average}
Let~$\vec{p}$ and~$\vec{q}$ be two probability (frequency) 
distributions for the same set. Let~$H^* = \max(H(\vec{p}), H(\vec{q}))$. Then
\begin{align}
\frac{1}{2} H^* \le \frac{1}{2}H(\vec{p}) + \frac{1}{2}H(\vec{q}) \le H(\frac{\vec{p}+\vec{q}}{2}) \le H^* + 1
\end{align}
\end{lemma}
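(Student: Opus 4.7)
The plan is to prove the three inequalities separately, working from left to right. The middle one is a direct invocation of concavity, the left one follows from nonnegativity of entropy, and the right one is the interesting bound, which I would obtain by introducing a mixture random variable.

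\textbf{First inequality.} Since both $H(\vec{p})$ and $H(\vec{q})$ are nonnegative, and $H^* = \max(H(\vec{p}), H(\vec{q})) \le H(\vec{p}) + H(\vec{q})$, dividing by~$2$ gives $\tfrac{1}{2} H^* \le \tfrac{1}{2} H(\vec{p}) + \tfrac{1}{2} H(\vec{q})$. This step is immediate.

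\textbf{Second inequality.} This is exactly the concavity of the Shannon entropy as a function of the probability distribution~\cite{cover2012elements}. Applying concavity with weights $(1/2, 1/2)$ yields
\[
\tfrac{1}{2} H(\vec{p}) + \tfrac{1}{2} H(\vec{q}) \le H\!\left(\tfrac{\vec{p}+\vec{q}}{2}\right),
\]
so I would just cite this standard fact.

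\textbf{Third inequality.} This is the main step, and the plan is to use the auxiliary-random-variable trick. I would define a pair $(T,Z)$ where $T \in \{0,1\}$ is uniform, $Z \mid T=0$ has distribution $\vec{p}$, and $Z \mid T=1$ has distribution $\vec{q}$. Then the marginal of $Z$ is exactly $\tfrac{\vec{p}+\vec{q}}{2}$, so $H(Z) = H\!\left(\tfrac{\vec{p}+\vec{q}}{2}\right)$. Using the chain rule and the fact that conditioning reduces entropy,
\[
H(Z) \le H(Z,T) = H(T) + H(Z \mid T) = 1 + \tfrac{1}{2} H(\vec{p}) + \tfrac{1}{2} H(\vec{q}) \le 1 + H^*.
\]
The hard part, if any, is keeping this clean: one must recognize that the symmetric average of two distributions is itself the marginal of a mixture, at which point the chain rule and $H(T)=1$ finish things off. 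Combining the three inequalities yields the lemma, and the upper bound $H^* + 1$ can then be substituted into the proof of Theorem~\ref{thm:symmetirc} by applying this lemma row-wise (for each fixed conditioning value) and averaging.
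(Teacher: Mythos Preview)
Your proof is correct and matches the approach the paper sketches: the paper does not spell out a detailed argument for this lemma but simply says it ``is based on the concavity of entropy~\cite{cover2012elements} and simple entropies algebra,'' and your three steps (nonnegativity, concavity, and the mixture/chain-rule bound $H(Z)\le H(T)+H(Z\mid T)=1+\tfrac12 H(\vec{p})+\tfrac12 H(\vec{q})$) are exactly those ingredients made explicit. Nothing is missing.
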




\subsection{Other Deferred Lemmas and Proofs}

\begin{lemma}\label{lem:treeentropy}
Consider a node~$u$ connected directly to 
the root of a statically optimal self-adjusting 
\emph{ego-}$\BST(u)$, serving only requests to and from~$u$. 
 If~$\bar{p}$ is the empirical frequency distribution 
 of destinations and~$\bar{p}$ is the empirical frequency 
 distribution of sources, then 
 the amortized cost of routing in and adjusting~\emph{ego-}$\BST(u)$ is~$O(H(\bar{p}))$.
\end{lemma}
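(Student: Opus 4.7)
The plan is to reduce the claim to the classical static optimality guarantee of Sleator--Tarjan for splay trees (or any assumed statically optimal self-adjusting BST). Since \emph{ego-}$\BST(u)$ serves only requests in which $u$ is either the source or the destination, every such request corresponds to an access to a particular key $v\in \wrkset(u)$ stored in the tree, and the cost of that request (both the forwarding cost along the tree from/to the root and the subsequent $\adjust()$ cost) is, up to a constant factor, the standard splay-tree access cost on key $v$.

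First I would formalize this reduction: for a request with $u$ as the source, the packet is forwarded from the root of \emph{ego-}$\BST(u)$ down to $v$, which is exactly a BST search whose cost equals the depth of $v$ plus the rotation/adjust cost; for a request with $u$ as the destination, the forwarder at $v$ (or the helper standing in for $v$) routes the packet upward to the root $u$, which again costs the depth of $v$ plus the subsequent $\adjust()$ invoked at the destination. In both directions, a single request translates to one BST access on key $v$ with amortized cost bounded by the access cost of the self-adjusting BST.

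Second, I would invoke the static optimality theorem of the underlying ego-BST. Under the hypothesis of the lemma, the empirical frequency of accesses to key $v$ is exactly $\bar{p}(v)$ (the distribution is the same whether $v$ is the communication destination or source from $u$, since these are assumed to coincide). The static optimality theorem then guarantees that the total amortized cost of the sequence is
\[
\sum_{v} m\cdot \bar{p}(v)\cdot O\!\bigl(\log(1/\bar{p}(v))\bigr) \;=\; O(m\cdot H(\bar{p})),
\]
so the amortized cost per request is $O(H(\bar{p}))$, as required.

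The one subtlety I would be careful about is that the lemma statement conflates the source-distribution and destination-distribution under the single symbol $\bar{p}$. This is justified precisely because Theorem~\ref{thm:symmetirc} is applied to the symmetrized demand $\bar{\sigma}^{(1)}$, where from the perspective of node $u$ the inbound and outbound empirical frequency to each partner $v$ are identical. I would make this explicit and note that the $O(\cdot)$ absorbs both the forwarding cost and the amortized adjustment cost (including the constant-cost bookkeeping of $\insrt()$ operations charged upon first access), so no separate argument for $\adjust()$ is needed beyond the standard splay analysis. This is the only step I expect to require care; everything else is a direct restatement of the black-box static optimality bound applied to the per-node access sequence.
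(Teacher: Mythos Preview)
Your proposal is correct and follows essentially the same argument as the paper: both reduce to the black-box static optimality of the underlying self-adjusting BST by observing that a request $(u,v)$ and a request $(v,u)$ induce the same access (and the same adjustment) on key $v$ in \emph{ego-}$\BST(u)$, so the combined access distribution is $\bar{p}$ and the amortized cost is $O(H(\bar{p}))$. Your write-up is in fact more explicit than the paper's (which dispatches the direction-symmetry in one sentence and does not spell out the depth/adjust decomposition or the role of Theorem~\ref{thm:symmetirc}), but the underlying idea is identical.
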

\begin{proof}
A self-adjusting~\emph{ego-}$\BST(u)$ is originally designed 
to serve requests from the root to internal nodes. 
If the empirical frequency distribution on destinations 
(searched items) is~$\bar{p'}$, then the amortized cost
of~\emph{ego-}$\BST(u)$ is~$O(H(\bar{p'})$, which is optimal~\cite{splaytrees}. 
In our case, we also have routes from internal nodes in 
the tree toward the root. But for the self-adjusting~\emph{ego-}$\BST(u)$,
it does not matter if the request is~$(u,v)$ or~$(v,u)$:
the adjustments are the same, hence we can assume that each route 
request~$(v,u)$ is actually a~$(u,v)$ request, i.e.,  all requests
are from
the root of the tree. 
The new empirical frequency distribution on destinations
(when all requests are from root to destinations)
is also~$\bar{p}$. Therefore the results holds.
\end{proof}

\begin{lemma}[Helping Nodes]\label{lem:helping}
As long as the coordinator did not call~$\reset()$,
the size of the forwarding table of small nodes is at most $\Delta=6\theta$ and helping nodes are available if needed.
\end{lemma}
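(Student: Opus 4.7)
The plan is to decompose the forwarding table of a small node $u$ into its three components, $S(u)$, $L(u)$, and $H(u)$, contributing $1$, $3$, and $6$ slots per entry respectively, and to bound each piece separately. The direct partners satisfy $S(u) \cup L(u) = \wrkset(u)$ with the two sets disjoint, so smallness of $u$ (i.e., $|\wrkset(u)| \leq \theta$, since the coordinator invokes \emph{makeLarge} as soon as the working set hits $\theta+1$) gives $|S(u)| + 3|L(u)| \leq 3|\wrkset(u)| \leq 3\theta$. This leaves $6\theta - 3\theta = 3\theta$ slots in the forwarding table for helper duties, i.e., it would be enough to maintain the invariant $|H(u)| \leq \theta/2$ for every small node $u$.

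Next I would bound the global resources available at any moment before $\reset()$. Every large node has working set size at least $\theta$, so if $\ell$ denotes the number of large nodes then $\ell \theta \leq \sum_v |\wrkset(v)| \leq n\theta/2$, hence $\ell \leq n/2$ and at least $n/2$ small nodes coexist at any time. Since each communication pair contributes exactly two entries to $\sum_v |\wrkset(v)|$, the total number of distinct pairs in the window is at most $n\theta/4 = cn$ (using $\theta = 4c$); in particular the number of large-large pairs, which is precisely the helper demand, is at most $cn$. I would then have the coordinator implement \emph{find a helper} as a least-loaded-first rule among small nodes. A pigeonhole argument establishes both claims of the lemma at once: if every small node already satisfied $|H(x)| \geq \theta/2$, the total helper load would be at least $(n/2)(\theta/2) = cn$, contradicting that the helper demand is strictly below $cn$ whenever a fresh large-large pair arrives. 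Therefore a small node with $|H(x)| < \theta/2$ always exists, the invariant $|H(u)| \leq \theta/2$ is preserved, and combining with the direct bound yields a table of size at most $3\theta + 6 \cdot (\theta/2) = 6\theta = \Delta$.

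The delicate point I expect to be the main obstacle is handling the moment a small node $u$ is promoted to large via \emph{makeLarge}$(u)$: at that instant $u$'s helper slots vanish (a large node's forwarding table stores only links to the root of \emph{ego-}$\BST(u)$ and virtual roots), so the $|H(u)|$ existing helper assignments must be redistributed over the remaining small nodes. I would argue that this rebalancing does not overflow any other small node's capacity by re-running the same pigeonhole count against the slightly shrunken pool of $\geq n/2 - 1$ small nodes and the still-capped budget of $cn$ large-large pairs: the per-node average remains at most $\theta/2$, with slack absorbed by the strict inequalities above. With this rebalancing step in place, the invariants are maintained inductively across every \emph{addRoute} and \emph{makeLarge} call until the next $\reset()$, proving both the $6\theta$ table bound and the availability of helpers.
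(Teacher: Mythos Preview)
Your proposal is correct and follows essentially the same counting argument as the paper: bound the own-working-set contribution by $3\theta$, observe that at least $n/2$ nodes are small (since each large node contributes at least $\theta$ to $\sum_v|\wrkset(v)|<n\theta/2$), cap the number of large--large pairs by $cn$, and average to get at most $2c=\theta/2$ helper duties per small node, i.e.\ another $3\theta$ ports.

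The differences are in rigor rather than strategy. The paper leaves the helper-assignment rule unspecified and simply asserts ``each helper node needs to help at most $2c$ such edges''; you make this precise via a least-loaded-first policy and an explicit pigeonhole contradiction, which is cleaner. More importantly, you flag and handle the transition when a helper $u$ is promoted by \emph{makeLarge}$(u)$ and its $H(u)$ assignments must be re-homed---the paper's proof is silent on this point. Your re-run of the pigeonhole count against the (still $\geq n/2$) pool of remaining small nodes and the unchanged $cn$ cap is the right fix, and it is what the paper would need to say to be complete.
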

\begin{proof}
Only small nodes can be helper nodes.
A small node has a maximum degree of~$\theta$,
so it may need 12$c=$3$\theta$ ports in its forwarding table
for the working set.
For how many edges can a helper node be used
as a relay? 
Since the number of helper nodes is at least~$n/2$
(otherwise more than half  of the nodes have degree larger
than twice the average degree, which leads to a
contradiction) and since there are at most~$cn$ large-large edges, 
each helper node needs to help at most~$2c$ such edges.
Each helper node requires 6 ports (3 for each tree),
so in total it needs at most 3$\theta$ ports. Since the size of the 
forwarding table is 6$\theta$, there will always be a helper
node while the number of edges is less than~$cn$ which mean total size of all working sets is less than ~$2cn =\theta n/2$
\end{proof}

\end{document}